\newcommand{\figref}[1]{Fig.\,\ref{#1}}
\newcommand{\tabref}[1]{Table~\ref{#1}}
\tikzset{every picture/.style={line width=0.75pt}}
\def\BibTeX{{\rm B\kern-.05em{\sc i\kern-.025em b}\kern-.08em
		T\kern-.1667em\lower.7ex\hbox{E}\kern-.125emX}}
 \newcommand{\light}{\texttt{LightSecAgg}\xspace}
\newcommand{\google}{\texttt{SecAgg}\xspace}
\newcommand{\googlep}{\texttt{SecAgg+}\xspace}
\newcommand{\turbo}{\texttt{TurboAgg}\xspace}
\newcommand{\swift}{\texttt{SwiftAgg}\xspace}
\begin{document}
	\title{\swift: Communication-Efficient and Dropout-Resistant Secure Aggregation for Federated Learning with Worst-Case Security Guarantees}
	\author[1,4]{Tayyebeh Jahani-Nezhad}
	\author[1]{Mohammad Ali Maddah-Ali}
	\author[2,3]{Songze Li}
	\author[4]{Giuseppe Caire}
	\affil[1]{Department of Electrical Engineering, Sharif University of Technology}
	\affil[2]{IoT Thrust, The Hong Kong University of Science and Technology (Guangzhou)}
	\affil[3]{Department of Computer Science and Engineering, The Hong Kong University of Science and Technology}
	\affil[4]{Electrical Engineering and Computer Science Department, Technische Universit\"at Berlin}
	
	\renewcommand\Authands{ and }
	\maketitle
	
	\begin{abstract}
% 		Federated learning is a collaborative learning approach that enables model training over a large dataset stored across several users, without revealing the local training data. A global model is stored on a central server and by aggregating the local model updates that users provide, it can be updated in an iterative process. Despite each user wants to keep its local data private, the server may be able to gain some additional information about the local data of users by using some model inversion attacks. Several solutions based on secure aggregation have been proposed, however, they result in significant communication costs.  In this paper, we propose an alternative scheme for secure aggregation in federated learning which reduces the communication cost while the individual models are kept private both from the server and other semi-honest users.  In a network of $N$ users consisting of $T$ semi-honest users and $D$ user dropouts, the proposed scheme achieves the normalized uplink cost of $(T+1)$ and the normalized user-to-user communication costs of up to $(N-1)(T+D+1)$. In this scheme, we demonstrate that, regardless of collusion between the server and any $T$ semi-honest users, the server cannot gain any additional information beyond the aggregation. In addition, the semi-honest users are not able to infer any information about the local models of other users. The privacy constraints of the proposed scheme are proven using information-theoretic analysis.
		
We propose \swift,  a novel secure aggregation protocol for federated learning systems, where a central server aggregates local models of $N$ distributed users, each of size $L$, trained on their local data, in a privacy-preserving manner. Compared with state-of-the-art secure aggregation protocols, \swift significantly reduces the communication overheads without any compromise on security. Specifically, in presence of at most $D$ dropout users,  \swift achieves a users-to-server communication load of $(T+1)L$ and a users-to-users communication load of up to $(N-1)(T+D+1)L$, with a worst-case information-theoretic security guarantee, against any subset of up to $T$ semi-honest users who may also collude with the curious server. The key idea of \swift is to partition the users into groups of size $D+T+1$, then in the first phase,  secret sharing and aggregation of the individual models are performed within each group, and then in the second phase, model aggregation is performed on $D+T+1$ sequences of users across the groups. If a user in a sequence drops out in the second phase, the rest of the sequence remain silent. 
This design allows only a subset of users to communicate with each other, and only the users in a single group to directly communicate with the server, eliminating the requirements of 1) all-to-all communication network across users; and 2) all users communicating with the server, for other secure aggregation protocols. This helps to substantially slash the communication costs of the system.
	\end{abstract}
	
	\begin{IEEEkeywords}
		Federated learning, Communication-efficient secure aggregation, Secret sharing, Dropout resiliency.
	\end{IEEEkeywords}
	
	\section{Introduction}
% 	Federated learning is a distributed framework for training machine learning models over datasets which are stored across a large number of users (such as mobile devices), while protecting the privacy of the individual users \cite{mcmahan2017communication,kairouz2019advances,li2020federated}.‌ 
	Federated learning (FL) is an emerging distributed learning framework that allows a group of distributed users (e.g., mobile devices) to collaboratively train a global model with their local private data, without sharing the data~\cite{mcmahan2017communication,kairouz2019advances,li2020federated}. 
% 	More precisely, a central server maintains a global model of training through an iterative process. During each iteration, the server sends the current state of the global model to the users. The users then update the model by training it on their local datasets, which creates a local model. By aggregating the local models, the server can update the global model for the next iteration. Finally, the server sends the updated global model to the users. 
	Specifically, in a FL system with a central server and $N$ users, during each training iteration, the server sends the current state of the global model to the users. Receiving the global model, each user then trains a local model with its local data, and sends the local model to the server. By aggregating the local models, the server can update the global model for the next iteration. 
% 	However, the users in this setting collaborate with the server to train the shared global model, they want to keep their local dataset private.
	While the local datasets are not directly shared with the server, several studies have shown that 
% 	the local models communicated to the server contain information about local datasets, and 
	a curious server can launch model inversion attacks to reveal information about the training data of individual users from their local models (see, e.g.,~\cite{zhu2020deep,geiping2020inverting}). Therefore, the key challenge to protect users' data privacy is to design \emph{secure aggregation} protocols, which allow the aggregation of local models to be computed without revealing each individual model. Moreover, as some users may randomly drop out of the aggregation process (due to low batteries or unstable connections), the server should be able to robustly recover the aggregated local models of the surviving users, in a privacy-preserving manner. 
% 	which means how to compute the sum of the local models of users without exposing any additional information beyond the sum \cite{bonawitz2016practical,bonawitz2017practical}.   
	
As such motivated, %there has been a series of studies that aim to develop secure model aggregation protocols for federated learning, in the presence of user dropouts. 
% 	that protect each user's privacy against the server and other semi-honest users that may collude with each other. 
% Furthermore, some users whose identities are not known in advance may drop out during the execution process. However, the server should be able to robustly recover the aggregated local models of the surviving users, which are not dropped out, in a privacy preserving manner. 
a secure aggregation protocol \google was proposed in~\cite{bonawitz2017practical}, where each user's local model is hidden under masks computed from pair-wise random seeds. These masks have an additive structure and can be canceled out when aggregated at the server, hence the exact model aggregation can be recovered without compromising each user's data privacy. To deal with user dropouts, each user secret shares its private seed 
%(which can be used to construct pair-wise seeds using Diffie-Hellman key agreement) 
with other users, such that the pair-wise masks between dropped and surviving users can be reconstructed at the server, and removed from the final aggregation result.
% In order to recover the result and cancel the masks of dropped users, the shares of the pairwise seeds of dropped users, or the shares of the private seeds of surviving users are collected by the server. Therefore, the server can recover the aggregate of all local models without learning any information about the individual local models.  
% state-of-the-art secure aggregation protocols with user dropouts, each user masks its local model using pairwise random seeds and a private one, then sends the masked local model to the server \cite{bonawitz2017practical,bell2020secure,so2021turbo,yang2021lightsecagg}.  The pairwise masks have additive structures and can be canceled out if added at the server. Furthermore, each user secret shares its private and pairwise random seeds with other users. In order to recover the result and cancel the masks of dropped users, the shares of the pairwise seeds of dropped users, or the shares of the private seeds of surviving users are collected by the server. Therefore, the server can recover the aggregate of all local models without learning any information about the individual local models.  
One of the major challenges for \google to scale up is the communication cost. First, the secret sharing among users requires all-to-all communication, which incurs quadratic cost in the number of users $N$ and is often not even feasible in practical scenarios; second, every user has to communicate its masked model to the server, yielding significant communication latency at the server as $N$ increases (especially for deep models with hundreds of millions of parameters).

%; finally, as the server needs to recover the private seed of dropped users one by one, as the number of dropped users (denoted by $D$) increases, the communication cost at the server to retrieve the secret shares scales linearly with $D$. 

% Since state-of-the-art protocols require users to generate random masks between each other, the communication cost increases quadratically with the number of users \cite{bonawitz2016practical}. In addition,  the number of mask reconstructions at the server significantly increases as more users drop out. 

There has been a series of works that aim to improve the communication efficiency of \google~(see, e.g.,~\cite{so2021turbo,bell2020secure,choi2020communication}). In~\cite{so2021turbo}, \turbo was proposed to perform secure aggregation following a circular topology, achieving a communication cost of $\mathcal{O}(LN\log N)$ at the server and $\mathcal{O}(L\log N)$ at each user, for a model size $L$.
%with $N$ corresponding to the number of users and $L$ is the length of model updates. 
\googlep was proposed in \cite{bell2020secure} to consider a $k$-regular communication graph among users instead of the complete graph, where $k=\mathcal{O}(\log N)$. It is shown that \googlep requires a communication of $\mathcal{O}(LN+ N\log N)$ at the server and $\mathcal{O}(L+ \log N)$ at each user. % \turbo was proposed in \cite{so2021turbo} to further reduce the server communication cost to $\mathcal{O}(LN\log N)$, with $N$ corresponding to the number of users and $L$ is the length of model updates. 
% Turbo-Aggregate provides the privacy guarantee against up to less than $N/2$ colluding users, with high probability.  
% Note that, in typical federated learning problems, the length of model updates ($L$) is in the millions, while the number of users ($N$) is in the tens of thousands. 
In~\cite{choi2020communication}, another similar idea was proposed in which a sparse random graph is used as communication network instead of the complete graph. 
%Note that all of these approaches provide probabilistic privacy-preserving frameworks.
While these approaches improve the communication efficiency of \google, they only provide probabilistic privacy guarantees as opposed to the worst-case guarantee of \google. Besides communication, other secure aggregation protocols have been proposed to reduce the computation complexity of \google~\cite{kadhe2020fastsecagg,yang2021lightsecagg}.
% and guarantees the probabilistic privacy against non-adaptive colluding users.

In this paper, we propose a new scheme for secure aggregation in federated learning called \swift, which reduces the communication loads and is robust against user dropouts.
In \swift, we first partition the users into groups of size $T+D+1$ (See Fig.~\ref{fig3}). Then in the first phase, users within each group secret share their local models and aggregate the shares locally. In the second phase, $T+D+1$ sequences of users are arranged, such that in each sequence, there is one user from each group. Then, in-group aggregated shares are sequentially aggregated in each sequence, which is finally sent to the server. In the second phase, if one user in a sequence drops out, the rest of the sequence remain silent.  

\swift simultaneously achieves the following advantages compared to the existing works:
	\begin{enumerate}
		\item
		It dose not require all-to-all communication among users, which is often not feasible in many scenarios.
		\item
		It requires very low communication cost per user and at the server (See Table~\ref{table} for comparison).
		\item
	It is resilient to user dropouts. 
		\item
		It achieves worst-case information-theoretic security against a curious server and any subset of $T < N-D$ colluding users.
	\end{enumerate}
	
In addition, \tabref{table} shows the comparison between different frameworks in secure aggregation problem in terms of the communication loads. For a fair comparison, we consider two metrics: server communication and per user communication. Server communication indicates the total size of all messages which are sent or received by the server, and per user communication denotes the total size of all messages which are sent by each user.
	\begin{table}
		\centering
		\caption{%Comparison of the related secure aggregation frameworks in federated learning in terms of the communication load.
		Communication loads of secure aggregation frameworks in federated learning.
		}\label{table}
		\resizebox{0.5\columnwidth}{!}{
			\begin{tabular}{||c |c c ||} 
				\hline
				Approach & Server comm.  & Per user comm.   \\ [0.5ex] 
				\hline\hline
				\google~\cite{bonawitz2016practical} & $\mathcal{O}(NL+N^2)$ & $\mathcal{O}(L+N)$  \\
				\hline
				\googlep~\cite{bell2020secure}& $\mathcal{O}(NL+N\log N)$ & $\mathcal{O}(L+\log N)$  \\
				\hline
				
				\turbo~\cite{so2021turbo} & $\mathcal{O}(NL\log N)$ & $\mathcal{O}(L\log N)$ \\ 
				\hline
				Choi et al.\cite{choi2020communication}& $\mathcal{O}(N(\sqrt{N\log N}+L))$ & $\mathcal{O}(\sqrt{N\log N}+L)$ \\
				\hline
				\light~\cite{yang2021lightsecagg}& $\mathcal{O}(NL)$ & $\mathcal{O}(L)$\\
				\hline
				Proposed \swift & $(T+1)L$ & $(T+D+1)L$ \\ %[1ex] 
				\hline
			\end{tabular}
		}
	\end{table}

%	The proposed scheme reduces the communication cost while the privacy of individual local models are protected both from the server and semi-honest users. In addition, the proposed scheme is robust against user dropouts. 
	%Our proposed scheme uses the idea of Shamir secret sharing and grouping method to reduce user-to-user communication cost to $(N-1)(T+1)L$ and reduce user-to-server communication cost (uplink) to $(T+1)L$. In addition, we theoretically proof the privacy protection in the information theoretic approach. 
	
	The rest of the paper is organized as follows. In Section II, we formally formulate the problem in the proposed scheme. In Section III, we state the main result. In Section IV, we present the proposed scheme using a motivating example and the general form. Finally, we present the detailed proofs for the correctness of the proposed scheme and the privacy.
	
	%\subsection{Notation}
	\noindent {\bf Notation:} Matrices are denoted by upper boldface letters. For $n\in\mathbb{N}$ the notation $[n]$ represents set $\{1,\dots,n\}$. 
	Furthermore, the cardinality of set $\mathcal{S}$ is denoted by $|\mathcal{S}|$. In addition, we denote the difference of two sets $\mathcal{A}$, $\mathcal{B}$ as $\mathcal{A}\backslash\mathcal{B}$, that means the set of elements which belong to $\mathcal{A}$ but not $\mathcal{B}$. $H(X)$ denotes the entropy of random variable $X$ and $I(X;Y)$ is the mutual information of two random variables $X$ and $Y$.

\section{Problem formulation}
	We consider the secure aggregation problem, for a federated learning system consisting of a server and $N$ users ${U}_1,\dots,{U}_N$. For each $n \in [N]$, user $n$ has a  private local model of length $L$, denoted by  $\mathbf{W}_n\in\mathbb{F}^{L}$, for some finite field $\mathbb{F}$. Each user $n$ also has a collection of random variables $\mathcal{Z}_n$, whose elements are selected uniformly at random from $\mathbb{F}^L$, and independently from each other and from the local models. 
	%	Local model $\mathbf{W}_n$ is independent from the elements of $\mathcal{Z}_{n'}$ for $n,n'\in[N]$. 
	Users can send messages to each other and also to the server, using error-free private communication links. 
	%	There exist communication links between the server and a subset of users indexed by $\mathcal{S}\subset[N]$, where $|\mathcal{S}|\le N$. 
	%Some users are also connected to each other and can share information through these connections. These communication links are secure and error-free in the protocol. 
	$\mathbf{M}^{(L)}_{n\to n'} \in \mathbb{F}^* \cup \{ \perp \}$ denotes  the message that user $n$ sends to user $n'$.  In addition, $\mathbf{X}^{(L)}_n \in \mathbb{F}^* \cup \{ \perp \}$ denotes the message sent by node $n$ to the server.  The null symbol $\perp$ represents the case no message is sent.

	The message $\mathbf{M}^{(L)}_{n\to n'}$ is a function 
	of $\mathbf{W}_n$, $\mathcal{Z}_n$, and the messages that node $n$ has received from other nodes so far. We denote the corresponding encoding function by $f^{(L)}_{n\to n'}$. 
	Similarly, $\mathbf{X}^{(L)}_n$ is a function of $\mathbf{W}_n$, $\mathcal{Z}_n$,  and the messages that node $n$ has received from other nodes so far. We denote the corresponding encoding function by $g^{(L)}_{n}$. For a subset ${\cal S} \subseteq [N]$, we let  $\mathcal{X}_{\mathcal{S}}=\{\mathbf{X}_n^{(L)}\}_{n\in\mathcal{S}}$ represent the set of messages the server receives from users in $\mathcal{S}$.
	We assume that a subset $\mathcal{D}\subset [N]$ of users drop out, i.e.,
	stay silent (or send $\perp$ to other nodes and the server) during the protocol execution. We denote the number of dropped out users as $D=|\mathcal{D}|$.
	
% 	We consider semi-honest server and users. That is, the server and the users follow the FL protocol faithfully, but 
	We also assume that a subset
	$\mathcal{T}\subset[N]$ of the users, whose identities are not known before the execution of the aggregation protocol, are semi-honest. It means that users in $\mathcal{T}$ follow the protocol faithfully; however, they are curious and may collude with each other and with the server to gain information about the local models of the honest users. We assume $|\mathcal{T}|\le T$, for some security parameter $T < N-D$. 
	
	A secure aggregation scheme consists of the encoding functions  $f^{(L)}_{n\to n'}$ and $g^{(L)}_{n}$, $n,n' \in [N]$, such that the following conditions are satisfied: 
	%such that
	%the server is able to recover the aggregation of the local models of the users $[N] \backslash \mathcal{D}$, i.e. $\mathbf{W}=\sum\limits_{n\in[N]\backslash\mathcal{D}}{\mathbf{W}}_{n}$, 	having received $\mathcal{X}$, and without violating the privacy of the users' local models.  
	
	%	\begin{align}
	%	\mathbf{W}=\sum\limits_{n\in[N]\backslash\mathcal{D}}{\mathbf{W}}_{n}, 
	%	\end{align}
	
	\textbf{1. Correctness:} The server is  able to recover $\mathbf{W}~=~\sum_{n\in[N]\backslash\mathcal{D}}{\mathbf{W}}_{n}$,  using 
	$\mathcal{X}_{[N]\backslash\mathcal{D}}~=~\{\mathbf{X}^{(L)}_n\}_{n \in [N]\backslash\mathcal{D}}$. More
	precisely,
	\begin{align}
	H\bigg(\sum\limits_{n\in[N]\backslash\mathcal{D}}\mathbf{W}_n\big| \mathcal{X}_{[N]\backslash\mathcal{D}}\bigg)=0.
	\end{align}
	
% 	\textbf{2. User-side Privacy}:  No information about the
% 	local models of the honest users can be inferred by users in $\mathcal{T}$. In particular, 
% 	\begin{align}\nonumber
% 	H\bigg(\hspace{-1mm}\mathbf{W}_n,{n\hspace{-1mm}\in\hspace{-1mm}[N]\backslash\mathcal{T}}\bigg| & \hspace{-1mm}\bigcup\limits_{k\in\mathcal{T}}\{{\mathbf{M}^{(L)}_{k'\to k}, k'\in [N] } \}, \{\mathbf{W}_k,\mathcal{Z}_k,{k\in\mathcal{T}}\}\hspace{-1mm}\bigg)\\
% 	&= H\bigg(\mathbf{W}_n,{n\in[N]\backslash\mathcal{T}}\bigg).
% 	\end{align}

	\textbf{2. Privacy Constraint:} Receiving $\mathcal{X}_{[N]\backslash\mathcal{D}}$, the server should not gain any information about local models of the honest users, beyond the aggregation of the local models, even if it colludes with semi-honest users in $\mathcal{T}$. Formally, 
	\begin{align}\label{serve-prv}
	\nonumber
	I\bigg(\mathbf{W}_n, {n\in[N]\backslash\mathcal{T}}; \mathcal{X}_{[N]\backslash\mathcal{D}},\bigcup\limits_{k\in\mathcal{T}}\{\mathbf{M}^{(L)}_{k'\to k}, k'\in[N]\},\{\mathbf{W}_k,\mathcal{Z}_k,{k\in\mathcal{T}}\}\bigg| \sum\limits_{n\in[N]\backslash\{\mathcal{D}\cup\mathcal{T}\}}{\mathbf{W}_n}\bigg)=0.
	\end{align}
	It is also possible for \swift to guarantee privacy when there are $T$ semi-honest users who collude with each other, and the server is curious, but does not collude with the semi-honest users. The privacy constraint here is that the server should not gain any information beyond the aggregation, and the semi-honest users should also not gain any information about the local models. 
	
	For a secure aggregation scheme satisfying the above two conditions, we define user-to-user communication load and uplink communication load as follows: 
	
	\begin{definition}[Normalized user-to-user communication load] denoted by $R^{(L)}_{\text{user}}$,   is defined as the 
		the aggregated size of all messages communicated between users, normalized by $L$, i.e.,
		\begin{align*}
		  R^{(L)}_{\text{user}} =\frac{1}{L} \sum_{n,n'\in [N]} H(\mathbf{M}^{(L)}_{n\to n'}).  
		\end{align*}
	\end{definition}
	\begin{definition}[Normalized uplink communication load] denoted by $R^{(L)}_{\text{uplink}}$,  is defined as the the aggregated size of all messages sent from users to the server, normalized by $L$, i.e.,
	\begin{align*}
	    R^{(L)}_{\text{uplink}}=\frac{1}{L} \sum_{n\in [N]} H(\mathbf{X}^{(L)}_{n}).
	\end{align*}
	\end{definition}
	We say that the pair of $(R_{\text{uplink}}, R_{\text{user}})$
	is achievable, if there exist a sequence of secure aggregation schemes with rate tuples $(R^{(L)}_{\text{uplink}}, R^{(L)}_{\text{user}})$, $L=1,2,\ldots$, such that 
	\begin{align}
	    R_{\text{uplink}}&=\limsup_{L \rightarrow \infty} R^{(L)}_{\text{uplink}},\\
	    R_{\text{user}}&=\limsup_{L \rightarrow \infty}R^{(L)}_{\text{user}}.
	\end{align}
	The capacity region of a secure aggregation problem, denoted by $\mathcal{C}_{N,D,T}$, is defined as the convex closure of all achievable rate tuples $(R_{\text{uplink}}, R_{\text{user}})$. 
	
	\section{Main result}
	We first present the main result of the proposed secure model aggregation scheme in the following theorem. 
	
	\begin{theorem}
		Consider a secure aggregation problem, with $N$ users and one server, where up to $T$ users are semi-honest and up to $D$ users may drop out. Let 
		\begin{align}\label{theorem}\nonumber
		& \mathcal{R}=\big\{  (R_{\text{uplink}} ,R_{\text{user}})\ | 
		R_{\text{uplink}} \geq (T+1),
		R_{\text{user}}\geq (N-1)(T+D+1)\big\},
		\end{align}  
		then $\mathcal{R} \subset \mathcal{C}_{N,D,T}$.
	\end{theorem}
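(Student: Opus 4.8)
The plan is to establish this achievability statement by exhibiting the \swift scheme and showing it attains the corner point $(R_{\text{uplink}},R_{\text{user}})=(T+1,(N-1)(T+D+1))$. Because every rate pair in $\mathcal{R}$ dominates this corner in both coordinates, and because any achievable scheme can be inflated to larger rates by padding its transmissions with independent uniform symbols (which affects neither correctness nor privacy), achievability of the corner together with this monotonicity gives $\mathcal{R}\subset\mathcal{C}_{N,D,T}$. For the construction I would first assume $T+D+1\mid N$ (otherwise adjoin dummy all-zero users), partition the users into $G=N/(T+D+1)$ groups of size $T+D+1$, fix $T+D+1$ distinct nonzero points $\alpha_1,\dots,\alpha_{T+D+1}\in\mathbb{F}$ (for $\mathbb{F}$ large enough), and identify the $j$-th member of each group with the $j$-th of $T+D+1$ sequences. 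Each user $n$ draws $T$ independent uniform vectors from $\mathcal{Z}_n$ and forms $f_n(x)=\mathbf{W}_n+\sum_{i=1}^{T}\mathbf{R}_{n,i}x^i$.

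Next I would specify the two phases and verify correctness. In Phase~1 every non-dropped user $n$ of a group evaluates $f_n$ at each $\alpha_j$ and sends $f_n(\alpha_j)$ to the group member in sequence $j$; that member sums the shares it receives to obtain $F_g(\alpha_j)$, where $F_g=\sum_{n\in g,\,n\notin\mathcal{D}}f_n$ has constant term equal to the surviving group sum and a uniform degree-$T$ tail. In Phase~2 the members of sequence $j$ are chained across the $G$ groups, each adding its $F_g(\alpha_j)$ to the running total, so the last member accumulates $F(\alpha_j)$ with $F=\sum_{n\in[N]\backslash\mathcal{D}}f_n$ and $F(0)=\mathbf{W}$. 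The decisive combinatorial point is that each dropped user belongs to exactly one sequence and silences only that one, so at least $(T+D+1)-D=T+1$ sequences stay intact and deliver $F(\alpha_j)$ to the server; since $\deg F\le T$, interpolation through these $\ge T+1$ points recovers $F$ and hence $F(0)=\mathbf{W}$, giving $H(\mathbf{W}\mid\mathcal{X}_{[N]\backslash\mathcal{D}})=0$.

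The communication loads follow by direct counting. In Phase~1 each user sends one $L$-symbol share to each of the $T+D$ other members of its group, for $N(T+D)L$ in total; in Phase~2 each of the $T+D+1$ sequences uses $G-1$ inter-user messages, for $(T+D+1)(G-1)L=(N-(T+D+1))L$. Their sum is $(N-1)(T+D+1)L$, so $R_{\text{user}}=(N-1)(T+D+1)$. Only the $T+1$ intact sequences reach the server, each with a single $L$-symbol message, giving $R_{\text{uplink}}=T+1$.

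I expect the privacy constraint~\eqref{serve-prv} to be the main obstacle, and I would address it last. The adversary's view comprises the shares $\{F(\alpha_j)\}$ arriving at the server, every intra-group share $f_n(\alpha_k)$ and Phase-2 partial sum received by a user in $\mathcal{T}$, and the data $\{\mathbf{W}_k,\mathcal{Z}_k\}_{k\in\mathcal{T}}$. The goal is to show that, conditioned on $\sum_{n\in[N]\backslash(\mathcal{D}\cup\mathcal{T})}\mathbf{W}_n$, this view is independent of the individual honest models. Two facts drive the argument: the server only ever sees the fully aggregated polynomial $F$, so it learns $\mathbf{W}$ and the aggregate randomness $\sum_{n\notin\mathcal{D}}\mathbf{R}_{n,i}$ but no individual $F_g$; and any honest user's polynomial $f_n$ is observed by $\mathcal{T}$ at no more than $T$ of the nonzero points $\alpha_k$, which for a degree-$T$ polynomial leaves its constant term $\mathbf{W}_n$ statistically free. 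The real work is to assemble the entire view as a single linear map over $\mathbb{F}^L$ applied to the honest secrets and honest randomness, and to prove—using the Vandermonde structure of the $\alpha_j$ together with the fact that each honest user injects exactly $T$ fresh random symbols—that the honest randomness unseen by the adversary is plentiful enough to make the view, given the conditioned sum, a deterministic function of fresh uniform noise; establishing the full rank of the pertinent coefficient matrix is the crux. The second, non-colluding-server guarantee then drops out as a special case of the same computation.
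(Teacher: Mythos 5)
Your construction, correctness argument, and load counting coincide with the paper's own: grouping into blocks of $\nu=T+D+1$, degree-$T$ Shamir polynomials $\mathbf{F}_n(x)$, in-group aggregation of shares, chained accumulation along the $\nu$ sequences, and Lagrange interpolation at the server from the at least $T+1$ surviving evaluations; your count $N(\nu-1)+(N-\nu)=(N-1)\nu$ is exactly the paper's bound on $R^{(L)}_{\text{user}}$, and your padding remark for rate pairs dominating the corner point is a reasonable (and in the paper only implicit) way to pass from the corner to all of $\mathcal{R}$.

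The genuine gap is the privacy constraint \eqref{serve-prv}. You correctly identify it as the main obstacle, state the right intuitions (the server sees only the fully aggregated polynomial; each honest user's polynomial is evaluated by $\mathcal{T}$ at no more than $T$ points), and then propose to finish by writing the adversary's entire view as a linear map and proving a coefficient matrix has full rank---but you never carry this out, and you yourself flag it as ``the crux.'' Without it, the scheme has not been shown to satisfy the definition of a secure aggregation scheme, so no rate pair has been placed in $\mathcal{C}_{N,D,T}$; this is where essentially all of the paper's technical work lies. Moreover, the step you defer is not routine: the colluding users' views are \emph{not} independent pieces, because the inter-group partial sums $\mathbf{S}_{(\gamma,t)}$ received by semi-honest users in different groups share accumulated noise, and the noise seen by the server in $\{\mathbf{S}_{(\Gamma,t)}\}$ overlaps with both. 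The paper resolves this not by a rank computation but by an entropy argument: a chain-rule induction over the semi-honest users (Lemma~\ref{lemma7}) showing each conditional mutual information term vanishes because the conditional entropy of the received messages is sandwiched by the entropy of their masking noises (uniform variables maximizing entropy), supported by the Shamir lemma (Lemma~\ref{lemma3}, Corollary~\ref{coro4}), the independence of accumulated noises across consecutive groups (Lemma~\ref{lemma6}), and a set of independence facts whose validity rests on every group containing at least one honest, non-dropped user (guaranteed by the group size $T+D+1$ and $T<N-D$). Any completed version of your linear-algebraic route would have to reproduce precisely these dependencies---in particular, it must treat separately the case where all $T$ semi-honest users sit in the last group, which the paper handles as a distinct case after \eqref{28}---so as written the proposal establishes correctness and the rates but not membership in the capacity region.
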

	
	\begin{proof}
		The proof can be found in Section \ref{proof_comm}.
	\end{proof}
	
% 	In this paper, the goal is to propose a secure aggregation scheme which is robust against user dropouts and achieves the rates in \eqref{theorem}.  
	To achieve the communication loads in~\eqref{theorem}, we propose \swift, a novel secure aggregation scheme, which partitions the users into disjoint groups and operates in two main phases: (i) Intra-group secret sharing and aggregation; and (ii) Inter-group communication and aggregation. Finally, communication with the server is required so that the server can obtain the aggregation of local models. %Note that in \swift, if a user drops out in the sequence, the rest of the sequence remains silent. 
	
%	A brief explanation of these phases is as follows.
%	\begin{enumerate}
%		\item
%		\emph{Grouping:} In this phase, the protocol partitions the users into $\Gamma$ groups, and $\mathcal{G}_{\gamma}$ is a set consisting  of the members of group $\gamma\in[\Gamma]$. 
%		\item
%		\emph{Intra-group secret sharing and aggregation:} In this phase,  user $n\in\mathcal{G}_{\gamma}$ sends its share to user $n'\in \mathcal{G}_{\gamma}$, where  $\gamma\in[\Gamma]$.
%		\item
%		\emph{ Inter-group communication and aggregation:} The $t$th user in $\mathcal{G}_{\gamma}$ computes the summation of messages which are received from other users, and sends its computation to the $t$th user in $\mathcal{G}_{\gamma+1}$.
%		\item
%		\emph{Communication with the server:} Users in the last group send their computation to the server. 
%		\item
%		\emph{Recovering the result:} Having received the outcomes of users in $\mathcal{G}_{\Gamma}$, the server is able to recover the aggregation of the local models of the surviving users which are not dropped out. 
%	\end{enumerate}
	
%	\songze{Compare with \turbo as they also use grouping idea.}
	
	Compared to the existing schemes in secure aggregation, \swift simultaneously reduces both the uplink and the user-to-user communication cost (Table~\ref{table}), while the correctness and the privacy constraint are satisfied via information-theoretic approaches.
%	We can summarize the advantages of \swift as follows.
%	\begin{enumerate}
%		\item
%		There is no need for all-to-all communication among users.
%		\item
%		The uplink communication does not increase if more users drop out.
%		\item
%		Less communication with the server is required.
%		\item
%		It achieves worst-case information-theoretic security against a curious server and colluding users.
%	\end{enumerate} 

	\section{The Proposed Scheme}\label{sec2}
	In this section,  we propose \swift which reduces the communication load of the secure aggregation problem in federated learning.  We first introduce the main idea of this method using a simple motivation example. 
	
	\subsection{Motivation Example}
	Consider a secure aggregated problem consisting of one server and $N=12$ users, $U_1, U_2, \dots, U_{12}$. There is $D=1$ user dropout and up to $T=2$ semi-honest users that may collude with each other to gain some information about the local models of other users. User $n$ contains its local model $\mathbf{W}_n$ which is a vector with a length of $L$, $n\in[12]$.  Also, each user has two random vectors, $\mathcal{Z}_n=\{\mathbf{Z}_{n,1},\mathbf{Z}_{n,2} \}$ which are chosen uniformly at random from $\mathbb{F}^L$. Each user takes the following steps:
	\begin{enumerate}
		\item {\bf Grouping:}
		The set of users are arbitrarily partitioned into $\Gamma=3$ groups with a size of $ D+T+1=4$, denoted by $\mathcal{G}_1, \mathcal{G}_2, \mathcal{G}_3$.  Figure \ref{basic_scheme}  represents one example of this partitioning, where $\mathcal{G}_1=\{U_1, U_2, U_3, U_4\}$, $\mathcal{G}_2=\{U_5, U_6, U_7, U_8\}$, and $\mathcal{G}_3=\{U_9, U_{10}, U_{11}, U_{12}\}$. We also order the users in each group arbitrarily. For simplicity of exposition, we may refer to user $n$ based on its location in a group of users. If user $n$ is the $t$th user in group $\gamma$, we call it as user $(\gamma, t)$.  For example in \figref{basic_scheme} , user 8 is the same as user $(2,4)$/
		We use indices $n$ or $(\gamma, t)$ interchangeably.

		%	Based on the users location, each member of the group has a unique index from 1 to 4 within the group.
		%	According to these indices, there is a directed communication link between the $t$th member of group  $\gamma$ and the $t$th member of group $\gamma+1$, where $\gamma\in[3], t\in[4]$.  
		\item {\bf Intra-Group Secret Sharing and Aggregation:}
		User $n$ forms the following polynomial.
		\begin{align}
		\mathbf{F}_{n}(x)=\mathbf{W}_{n}+\mathbf{Z}_{n,1}x+\mathbf{Z}_{n,2}x^2,
		\end{align}
		where  $\mathbf{F}_{n}(0)=\mathbf{W}_{n}$ is the local model of  user $n$, $n\in[12]$. 
		%In the next steps, each user uses %$\mathbf{f}_{n}(x)$ to share $\mathbf{W}_{n}$ with some users.
		
		Let $\alpha_t\in \mathbb{F}$, $t\in [4]$, are four distinct constants. We assign   $\alpha_t$ to user $t$ of all groups, i.e., users $(\gamma,t)$, $\gamma=1,\ldots, \Gamma$. 
		
		In this step, each user $(\gamma,t)$  sends the evaluation of its polynomial function at $\alpha_{t'}$, i.e.,  $\mathbf{F}_{(\gamma,t)}(\alpha_{t'})$, to user $(\gamma,t')$, for $t'\in [4]$. For example, in \figref{basic_scheme}, user $(2,1 )$, which is indeed $U_5$, sends
		$\mathbf{F}_{(2,1)}(\alpha_{1})= \mathbf{F}_{5}(\alpha_{1})$, 
		$\mathbf{F}_{(2,1)}(\alpha_{2})= \mathbf{F}_{5}(\alpha_{2})$, 
		$\mathbf{F}_{(2,1)}(\alpha_{3})= \mathbf{F}_{5}(\alpha_{3})$,
		$\mathbf{F}_{(2,1)}(\alpha_{4})= \mathbf{F}_{5}(\alpha_{4})$, to user $(2,1)$ (or user $U_5$ which is basically itself), user $(2,2)$ (or user $U_6$), user $(2,3)$ (user $U_7$), and user $(2,4)$ (user $U_8$) respectively. 
		If a user $(\gamma,t)$ drops out and stays silent, $\mathbf{F}_{(\gamma,t)}(\alpha_{t'})$ is just presumed to be zero.
		
		Each user $(\gamma,t)$ calculates 
		\begin{align*}
		\mathbf{Q}_{(\gamma,t)}=\mathbf{F}_{(\gamma, 1)}(\alpha_{t})+\mathbf{F}_{(\gamma, 2)}(\alpha_{t})
		+\mathbf{F}_{(\gamma, 3)}(\alpha_{t})+\mathbf{F}_{(\gamma, 4)}(\alpha_{t}).
		\end{align*}
		
		In this example, assume that $U_7$ or user $(2,3)$ drops out and does not send its share to other users in the second group. Other users within the group treat its share as zero. In this phase, within each group, at most 12 communication take place. 
		\item  {\bf Inter-group Communication and Aggregation:} In this phase, user $(1,t)$, $t\in [4]$, calculates the following message.
		\begin{align}
		\mathbf{S}_{(1,t)} & = \mathbf{Q}_{(1,t)},   
		\end{align}
		and sends $\mathbf{S}_{(1,t)}$ to user $(2,t)$. 
		
		User $(2,t)$, $t\in[4]$,
		calculates  $\mathbf{S}_{(2,t)}$ as 
		\begin{align}
		\mathbf{S}_{(2,t)}  = \mathbf{S}_{(1,t)} +  \mathbf{Q}_{(2,t)},
		\end{align}
		upon receiving $\mathbf{S}_{(1,t)}$ and sends it to user $(3,t)$. If user $(2,t)$ does not receive $\mathbf{S}_{(1,t)}$, it also remains silent for the rest of the protocol. In this particular example that user 7 drops out, it sends no message to user 11, and thus user 11 also remains silent.

		%		$S_{(\gamma,t)}$
		%	Each user computes the summation of the received values from others and sends the result to the next group based on the communication direction.  For example, $U_5$ in the second group receives $\mathbf{f}_{6}(\alpha_1)$ and $\mathbf{f}_{8}(\alpha_1)$ from $U_6$ and $U_8$ in its group respectively. Also, it receives $\sum\limits_{n=1}^4{\mathbf{f}_{n}(\alpha_1)}$ from $U_1$ of the first group. Then, 
		%		it sends $\sum_{\substack{n=1 \\ n\neq 7}}^8{\mathbf{f}_{n}(\alpha_1)}$ to $U_9$ in the third group.
		\item 	{\bf Communication with the Server:}
		User $t$ of the last group, i.e., user $(3, t)$ calculates 
		\begin{align}
		\mathbf{S}_{(3,t)}  = \mathbf{S}_{(2,t)} +  \mathbf{Q}_{(3,t)},
		\end{align}
		and sends $\mathbf{S}_{(3,t)}$ to the server, for $t\in [4]$.  
		Clearly in this example, user 11 remains silent and sends nothing (or null message $\perp$) to the server. 
		%	computes the sum of its received values, which is $\sum_{\substack{n=1 \\ n\neq 7}}^{12}\mathbf{f}_n({\alpha_t})$, and send it to the server, $t=\{1,2,4\}$. In this step, since $U_{11}$ does not receive any information from its neighboring group, it sends nothing to the server. 
		
		\item {\bf Recovering the result:}
		Let us define 
		\begin{align*}
		\mathbf{F}(x)\triangleq\sum_{\substack{n=1 \\ n\neq 7}}^{12}\mathbf{F}_{n}(x)=\sum_{\substack{n=1 \\ n\neq 7}}^{12}\mathbf{W}_{n}+x\sum_{\substack{n=1 \\ n\neq 7}}^{12}\mathbf{Z}_{n,1}+x^2\sum_{\substack{n=1 \\ n\neq 7}}^{12}\mathbf{Z}_{n,2}.
		\end{align*}
		One can verify that  $\mathbf{S}_{(3,t)}$, for $t=1,2,4$ that are received by the server are indeed equal to $\mathbf{F}(\alpha_1)$, $\mathbf{F}(\alpha_2)$, $\mathbf{F}(\alpha_4)$.
		
		Since $\mathbf{F}(x)$ is a polynomial function of degree 2, based on Lagrange interpolation rule the server can recover all the coefficients of this polynomial. In particular, the server can recover $\mathbf{F}(0)=\sum_{\substack{n=1 \\ n\neq 7}}^{12}\mathbf{W}_{n}$. Thus, the server is able to recover the aggregation of local models of surviving users and the correctness constraint is satisfied.
	\end{enumerate} 
The privacy constraint will be proven formally later in Subsection \ref{basic_privacy} for the general case.
	%\begin{remark}[Privacy of this example]
%		This algorithm guarantees the privacy constraint. An intuitive explanation is that each user adds two coefficients $\mathbf{Z}_{n,1}$ and $\mathbf{Z}_{n,2}$, which are chosen independently and uniformly at random, in  its sharing polynomial function, $\mathbf{F}_{n}(x)$. Thus, based on Shamir's secret sharing protocol \cite{shamir1979share}, 
%		any subset, including $T=2$ users, cannot gain any information about the local models of other users. This will be proven formally later in Subsection \ref{basic_privacy} for the general case. 
%	\end{remark}
	\begin{figure}

		\centering

		\tikzset{every picture/.style={line width=0.75pt}} %set default line width to 0.75pt        
		\scalebox{1}{

\tikzset{every picture/.style={line width=0.75pt}} %set default line width to 0.75pt        

\begin{tikzpicture}[x=0.75pt,y=0.75pt,yscale=-1,xscale=1]
%uncomment if require: \path (0,241); %set diagram left start at 0, and has height of 241

%Shape: Trapezoid [id:dp34479727212624023] 
\draw  [color={rgb, 255:red, 65; green, 117; blue, 5 }  ,draw opacity=1 ] (50.73,43.73) -- (98.85,79.29) -- (98.73,155.3) -- (50.49,190.71) -- cycle ;
%Straight Lines [id:da4760130115961876] 
\draw [color={rgb, 255:red, 65; green, 117; blue, 5 }  ,draw opacity=1 ]   (90.75,88.77) -- (55.35,179.82) ;
%Straight Lines [id:da08450081397057763] 
\draw [color={rgb, 255:red, 65; green, 117; blue, 5 }  ,draw opacity=1 ]   (56.72,54.09) -- (92.4,146.08) ;
\draw  [color={rgb, 255:red, 65; green, 117; blue, 5 }  ,draw opacity=1 ] (47.96,100.73) -- (50.42,96.61) -- (53.13,100.57) ;
\draw  [color={rgb, 255:red, 65; green, 117; blue, 5 }  ,draw opacity=1 ] (69.14,92.2) -- (68.59,85.85) -- (73.44,89.99) ;
\draw  [color={rgb, 255:red, 65; green, 117; blue, 5 }  ,draw opacity=1 ] (96.26,110.16) -- (98.8,105.63) -- (101.62,109.99) ;
\draw  [color={rgb, 255:red, 65; green, 117; blue, 5 }  ,draw opacity=1 ] (82.8,102.78) -- (86.7,99.04) -- (87.46,104.39) ;
\draw  [color={rgb, 255:red, 65; green, 117; blue, 5 }  ,draw opacity=1 ] (70.48,61.46) -- (68.15,56.82) -- (73.34,56.92) ;
\draw  [color={rgb, 255:red, 65; green, 117; blue, 5 }  ,draw opacity=1 ] (79.3,167.14) -- (84.37,166.02) -- (83,171.03) ;
\draw  [color={rgb, 255:red, 65; green, 117; blue, 5 }  ,draw opacity=1 ] (101.81,129.23) -- (99.14,133.69) -- (96.44,129.25) ;
\draw  [color={rgb, 255:red, 65; green, 117; blue, 5 }  ,draw opacity=1 ] (53.59,139.95) -- (50.7,144.26) -- (48.23,139.7) ;
\draw  [color={rgb, 255:red, 65; green, 117; blue, 5 }  ,draw opacity=1 ] (72.67,144.16) -- (67.96,146.34) -- (68.22,141.15) ;
\draw  [color={rgb, 255:red, 65; green, 117; blue, 5 }  ,draw opacity=1 ] (87.51,126.56) -- (87.38,131.75) -- (82.85,129.22) ;
\draw  [color={rgb, 255:red, 65; green, 117; blue, 5 }  ,draw opacity=1 ] (80.77,62.9) -- (83.14,67.52) -- (77.95,67.47) ;
%Shape: Circle [id:dp6823999950242936] 
\draw  [fill={rgb, 255:red, 255; green, 255; blue, 255 }  ,fill opacity=1 ] (41.08,44.28) .. controls (41.08,37.56) and (46.52,32.12) .. (53.24,32.12) .. controls (59.96,32.12) and (65.4,37.56) .. (65.4,44.28) .. controls (65.4,50.99) and (59.96,56.44) .. (53.24,56.44) .. controls (46.52,56.44) and (41.08,50.99) .. (41.08,44.28) -- cycle ;
%Shape: Circle [id:dp9852901320351408] 
\draw  [color={rgb, 255:red, 0; green, 0; blue, 0 }  ,draw opacity=1 ][fill={rgb, 255:red, 255; green, 255; blue, 255 }  ,fill opacity=1 ] (88.56,81.6) .. controls (88.56,74.88) and (94.01,69.44) .. (100.72,69.44) .. controls (107.44,69.44) and (112.88,74.88) .. (112.88,81.6) .. controls (112.88,88.32) and (107.44,93.76) .. (100.72,93.76) .. controls (94.01,93.76) and (88.56,88.32) .. (88.56,81.6) -- cycle ;
%Shape: Circle [id:dp7786756695332293] 
\draw  [color={rgb, 255:red, 0; green, 0; blue, 0 }  ,draw opacity=1 ][fill={rgb, 255:red, 255; green, 255; blue, 255 }  ,fill opacity=1 ] (86.27,157.57) .. controls (86.27,150.86) and (91.71,145.41) .. (98.43,145.41) .. controls (105.15,145.41) and (110.59,150.86) .. (110.59,157.57) .. controls (110.59,164.29) and (105.15,169.73) .. (98.43,169.73) .. controls (91.71,169.73) and (86.27,164.29) .. (86.27,157.57) -- cycle ;
%Shape: Circle [id:dp27342856467698895] 
\draw  [fill={rgb, 255:red, 255; green, 255; blue, 255 }  ,fill opacity=1 ] (37.04,191.59) .. controls (37.04,184.87) and (42.49,179.43) .. (49.2,179.43) .. controls (55.92,179.43) and (61.36,184.87) .. (61.36,191.59) .. controls (61.36,198.3) and (55.92,203.75) .. (49.2,203.75) .. controls (42.49,203.75) and (37.04,198.3) .. (37.04,191.59) -- cycle ;
%Shape: Trapezoid [id:dp9921930573789268] 
\draw  [color={rgb, 255:red, 65; green, 117; blue, 5 }  ,draw opacity=1 ] (181.53,44.24) -- (229.65,79.81) -- (229.53,155.81) -- (181.29,191.22) -- cycle ;
%Straight Lines [id:da8358973095164279] 
\draw [color={rgb, 255:red, 65; green, 117; blue, 5 }  ,draw opacity=1 ]   (221.55,89.28) -- (186.15,180.33) ;
%Straight Lines [id:da4654837086944068] 
\draw [color={rgb, 255:red, 65; green, 117; blue, 5 }  ,draw opacity=1 ]   (187.52,54.6) -- (223.2,146.59) ;
\draw  [color={rgb, 255:red, 65; green, 117; blue, 5 }  ,draw opacity=1 ] (178.76,101.25) -- (181.22,97.12) -- (183.93,101.09) ;
\draw  [color={rgb, 255:red, 65; green, 117; blue, 5 }  ,draw opacity=1 ] (213.6,103.29) -- (217.5,99.55) -- (218.26,104.91) ;
\draw  [color={rgb, 255:red, 65; green, 117; blue, 5 }  ,draw opacity=1 ] (201.54,61.84) -- (199.21,57.2) -- (204.41,57.3) ;
\draw  [color={rgb, 255:red, 65; green, 117; blue, 5 }  ,draw opacity=1 ] (209.7,167.25) -- (214.77,166.13) -- (213.4,171.14) ;
\draw  [color={rgb, 255:red, 65; green, 117; blue, 5 }  ,draw opacity=1 ] (232.21,129.35) -- (229.54,133.8) -- (226.84,129.37) ;
\draw  [color={rgb, 255:red, 65; green, 117; blue, 5 }  ,draw opacity=1 ] (184.39,139.66) -- (181.5,143.97) -- (179.03,139.41) ;
\draw  [color={rgb, 255:red, 65; green, 117; blue, 5 }  ,draw opacity=1 ] (204.13,143.27) -- (199.42,145.45) -- (199.69,140.26) ;
\draw  [color={rgb, 255:red, 65; green, 117; blue, 5 }  ,draw opacity=1 ] (218.31,126.67) -- (218.18,131.86) -- (213.65,129.33) ;
\draw  [color={rgb, 255:red, 65; green, 117; blue, 5 }  ,draw opacity=1 ] (211.84,62.61) -- (214.21,67.23) -- (209.02,67.18) ;
%Shape: Circle [id:dp9684022466823248] 
\draw  [fill={rgb, 255:red, 255; green, 255; blue, 255 }  ,fill opacity=1 ] (171.48,43.99) .. controls (171.48,37.27) and (176.92,31.83) .. (183.64,31.83) .. controls (190.36,31.83) and (195.8,37.27) .. (195.8,43.99) .. controls (195.8,50.7) and (190.36,56.15) .. (183.64,56.15) .. controls (176.92,56.15) and (171.48,50.7) .. (171.48,43.99) -- cycle ;
%Shape: Circle [id:dp33080877944816955] 
\draw  [fill={rgb, 255:red, 255; green, 255; blue, 255 }  ,fill opacity=1 ] (218.96,81.31) .. controls (218.96,74.6) and (224.41,69.15) .. (231.12,69.15) .. controls (237.84,69.15) and (243.28,74.6) .. (243.28,81.31) .. controls (243.28,88.03) and (237.84,93.47) .. (231.12,93.47) .. controls (224.41,93.47) and (218.96,88.03) .. (218.96,81.31) -- cycle ;
%Shape: Circle [id:dp23643441486718664] 
\draw  [color={rgb, 255:red, 0; green, 0; blue, 0 }  ,draw opacity=1 ][fill={rgb, 255:red, 255; green, 255; blue, 255 }  ,fill opacity=1 ] (216.67,157.28) .. controls (216.67,150.57) and (222.11,145.12) .. (228.83,145.12) .. controls (235.55,145.12) and (240.99,150.57) .. (240.99,157.28) .. controls (240.99,164) and (235.55,169.45) .. (228.83,169.45) .. controls (222.11,169.45) and (216.67,164) .. (216.67,157.28) -- cycle ;
%Shape: Circle [id:dp021137407632321192] 
\draw  [fill={rgb, 255:red, 255; green, 255; blue, 255 }  ,fill opacity=1 ] (167.44,191.3) .. controls (167.44,184.58) and (172.89,179.14) .. (179.6,179.14) .. controls (186.32,179.14) and (191.76,184.58) .. (191.76,191.3) .. controls (191.76,198.02) and (186.32,203.46) .. (179.6,203.46) .. controls (172.89,203.46) and (167.44,198.02) .. (167.44,191.3) -- cycle ;
%Straight Lines [id:da4200880686826558] 
\draw [color={rgb, 255:red, 74; green, 144; blue, 226 }  ,draw opacity=1 ]   (65.4,44.28) -- (171.48,43.99) ;
%Straight Lines [id:da743906889723112] 
\draw [color={rgb, 255:red, 74; green, 144; blue, 226 }  ,draw opacity=1 ]   (112.88,81.6) -- (218.96,81.31) ;
%Straight Lines [id:da8723789836600313] 
\draw [color={rgb, 255:red, 74; green, 144; blue, 226 }  ,draw opacity=1 ]   (110.59,157.57) -- (216.67,157.28) ;
%Straight Lines [id:da16739265928713798] 
\draw [color={rgb, 255:red, 74; green, 144; blue, 226 }  ,draw opacity=1 ]   (61.36,191.59) -- (167.44,191.3) ;
\draw  [color={rgb, 255:red, 74; green, 144; blue, 226 }  ,draw opacity=1 ] (123.85,41.55) -- (128.25,44.31) -- (123.76,46.91) ;
\draw  [color={rgb, 255:red, 74; green, 144; blue, 226 }  ,draw opacity=1 ] (155.85,78.75) -- (160.25,81.51) -- (155.76,84.11) ;
\draw  [color={rgb, 255:red, 74; green, 144; blue, 226 }  ,draw opacity=1 ] (155.05,154.75) -- (159.45,157.51) -- (154.96,160.11) ;
\draw  [color={rgb, 255:red, 74; green, 144; blue, 226 }  ,draw opacity=1 ] (114.25,188.75) -- (118.65,191.51) -- (114.16,194.11) ;
%Shape: Trapezoid [id:dp9442554571230253] 
\draw  [color={rgb, 255:red, 65; green, 117; blue, 5 }  ,draw opacity=1 ] (311.93,44.04) -- (360.05,79.61) -- (359.93,155.61) -- (311.69,191.02) -- cycle ;
%Straight Lines [id:da26115679699211314] 
\draw [color={rgb, 255:red, 65; green, 117; blue, 5 }  ,draw opacity=1 ]   (351.95,89.08) -- (316.55,180.13) ;
%Straight Lines [id:da11337061896495548] 
\draw [color={rgb, 255:red, 65; green, 117; blue, 5 }  ,draw opacity=1 ]   (317.92,54.4) -- (353.6,146.39) ;
\draw  [color={rgb, 255:red, 65; green, 117; blue, 5 }  ,draw opacity=1 ] (309.16,101.05) -- (311.62,96.92) -- (314.33,100.89) ;
\draw  [color={rgb, 255:red, 65; green, 117; blue, 5 }  ,draw opacity=1 ] (330.34,92.51) -- (329.79,86.16) -- (334.64,90.3) ;
\draw  [color={rgb, 255:red, 65; green, 117; blue, 5 }  ,draw opacity=1 ] (357.46,110.47) -- (360,105.94) -- (362.82,110.3) ;
\draw  [color={rgb, 255:red, 65; green, 117; blue, 5 }  ,draw opacity=1 ] (344,103.09) -- (347.9,99.35) -- (348.66,104.71) ;
\draw  [color={rgb, 255:red, 65; green, 117; blue, 5 }  ,draw opacity=1 ] (331.28,61.64) -- (328.95,57) -- (334.14,57.1) ;
\draw  [color={rgb, 255:red, 65; green, 117; blue, 5 }  ,draw opacity=1 ] (340.1,167.05) -- (345.17,165.93) -- (343.8,170.94) ;
\draw  [color={rgb, 255:red, 65; green, 117; blue, 5 }  ,draw opacity=1 ] (362.61,129.15) -- (359.94,133.6) -- (357.24,129.17) ;
\draw  [color={rgb, 255:red, 65; green, 117; blue, 5 }  ,draw opacity=1 ] (314.79,139.46) -- (311.9,143.77) -- (309.43,139.21) ;
\draw  [color={rgb, 255:red, 65; green, 117; blue, 5 }  ,draw opacity=1 ] (334.53,143.07) -- (329.82,145.25) -- (330.09,140.06) ;
\draw  [color={rgb, 255:red, 65; green, 117; blue, 5 }  ,draw opacity=1 ] (348.71,126.47) -- (348.58,131.66) -- (344.05,129.13) ;
\draw  [color={rgb, 255:red, 65; green, 117; blue, 5 }  ,draw opacity=1 ] (341.57,62.41) -- (343.94,67.03) -- (338.75,66.98) ;

%Shape: Circle [id:dp891437083269458] 
\draw  [fill={rgb, 255:red, 255; green, 255; blue, 255 }  ,fill opacity=1 ] (301.88,43.79) .. controls (301.88,37.07) and (307.32,31.63) .. (314.04,31.63) .. controls (320.76,31.63) and (326.2,37.07) .. (326.2,43.79) .. controls (326.2,50.5) and (320.76,55.95) .. (314.04,55.95) .. controls (307.32,55.95) and (301.88,50.5) .. (301.88,43.79) -- cycle ;
%Shape: Circle [id:dp10882881360473173] 
\draw  [fill={rgb, 255:red, 255; green, 255; blue, 255 }  ,fill opacity=1 ] (349.36,81.11) .. controls (349.36,74.4) and (354.81,68.95) .. (361.52,68.95) .. controls (368.24,68.95) and (373.68,74.4) .. (373.68,81.11) .. controls (373.68,87.83) and (368.24,93.27) .. (361.52,93.27) .. controls (354.81,93.27) and (349.36,87.83) .. (349.36,81.11) -- cycle ;
%Shape: Circle [id:dp5359490471210497] 
\draw  [color={rgb, 255:red, 0; green, 0; blue, 0 }  ,draw opacity=1 ][fill={rgb, 255:red, 255; green, 255; blue, 255 }  ,fill opacity=1 ] (347.07,157.08) .. controls (347.07,150.37) and (352.51,144.92) .. (359.23,144.92) .. controls (365.95,144.92) and (371.39,150.37) .. (371.39,157.08) .. controls (371.39,163.8) and (365.95,169.25) .. (359.23,169.25) .. controls (352.51,169.25) and (347.07,163.8) .. (347.07,157.08) -- cycle ;
%Shape: Circle [id:dp8133050514648781] 
\draw  [fill={rgb, 255:red, 255; green, 255; blue, 255 }  ,fill opacity=1 ] (297.84,191.1) .. controls (297.84,184.38) and (303.29,178.94) .. (310,178.94) .. controls (316.72,178.94) and (322.16,184.38) .. (322.16,191.1) .. controls (322.16,197.82) and (316.72,203.26) .. (310,203.26) .. controls (303.29,203.26) and (297.84,197.82) .. (297.84,191.1) -- cycle ;
%Straight Lines [id:da7513423679936573] 
\draw [color={rgb, 255:red, 74; green, 144; blue, 226 }  ,draw opacity=1 ]   (195.8,44.08) -- (301.88,43.79) ;
%Straight Lines [id:da6351819871788025] 
\draw [color={rgb, 255:red, 74; green, 144; blue, 226 }  ,draw opacity=1 ]   (243.28,81.4) -- (349.36,81.11) ;
%Straight Lines [id:da4323681709840159] 
\draw [color={rgb, 255:red, 74; green, 144; blue, 226 }  ,draw opacity=1 ] [dash pattern={on 4.5pt off 4.5pt}]  (240.99,157.37) -- (347.07,157.08) ;
%Straight Lines [id:da7469086609537925] 
\draw [color={rgb, 255:red, 74; green, 144; blue, 226 }  ,draw opacity=1 ]   (191.76,191.39) -- (297.84,191.1) ;
\draw  [color={rgb, 255:red, 74; green, 144; blue, 226 }  ,draw opacity=1 ] (254.25,41.35) -- (258.65,44.11) -- (254.16,46.71) ;
\draw  [color={rgb, 255:red, 74; green, 144; blue, 226 }  ,draw opacity=1 ] (286.25,78.55) -- (290.65,81.31) -- (286.16,83.91) ;
\draw  [color={rgb, 255:red, 74; green, 144; blue, 226 }  ,draw opacity=1 ] (285.45,154.55) -- (289.85,157.31) -- (285.36,159.91) ;
\draw  [color={rgb, 255:red, 74; green, 144; blue, 226 }  ,draw opacity=1 ] (244.65,188.55) -- (249.05,191.31) -- (244.56,193.91) ;
%Rounded Rect [id:dp9286532144935098] 
\draw   (423.69,44.44) .. controls (423.69,40.38) and (426.98,37.09) .. (431.04,37.09) -- (453.09,37.09) .. controls (457.15,37.09) and (460.44,40.38) .. (460.44,44.44) -- (460.44,196.94) .. controls (460.44,201) and (457.15,204.29) .. (453.09,204.29) -- (431.04,204.29) .. controls (426.98,204.29) and (423.69,201) .. (423.69,196.94) -- cycle ;
%Straight Lines [id:da7981380953374866] 
\draw [color={rgb, 255:red, 74; green, 144; blue, 226 }  ,draw opacity=1 ]   (326.2,43.79) -- (423.69,44.44) ;
\draw  [color={rgb, 255:red, 74; green, 144; blue, 226 }  ,draw opacity=1 ] (387.05,41.35) -- (391.45,44.11) -- (386.96,46.71) ;
%Straight Lines [id:da18053217631663032] 
\draw [color={rgb, 255:red, 74; green, 144; blue, 226 }  ,draw opacity=1 ]   (373.68,81.11) -- (424.44,80.69) ;
\draw  [color={rgb, 255:red, 74; green, 144; blue, 226 }  ,draw opacity=1 ] (386.25,78.15) -- (390.65,80.91) -- (386.16,83.51) ;
%Straight Lines [id:da23816434112223384] 
\draw [color={rgb, 255:red, 74; green, 144; blue, 226 }  ,draw opacity=1 ] [dash pattern={on 4.5pt off 4.5pt}]  (371.39,157.08) -- (424.04,157.49) ;
\draw  [color={rgb, 255:red, 74; green, 144; blue, 226 }  ,draw opacity=1 ] (386.25,154.15) -- (390.65,156.91) -- (386.16,159.51) ;
%Straight Lines [id:da2609814147749314] 
\draw [color={rgb, 255:red, 74; green, 144; blue, 226 }  ,draw opacity=1 ]   (322.16,191.1) -- (422.44,190.69) ;
\draw  [color={rgb, 255:red, 74; green, 144; blue, 226 }  ,draw opacity=1 ] (386.65,188.15) -- (391.05,190.91) -- (386.56,193.51) ;
%Flowchart: Summing Junction [id:dp7967125136563118] 
\draw  [color={rgb, 255:red, 208; green, 2; blue, 27 }  ,draw opacity=1 ][line width=1.5]  (221.37,157.28) .. controls (221.37,152.7) and (224.71,148.99) .. (228.83,148.99) .. controls (232.95,148.99) and (236.29,152.7) .. (236.29,157.28) .. controls (236.29,161.87) and (232.95,165.58) .. (228.83,165.58) .. controls (224.71,165.58) and (221.37,161.87) .. (221.37,157.28) -- cycle ; \draw  [color={rgb, 255:red, 208; green, 2; blue, 27 }  ,draw opacity=1 ][line width=1.5]  (223.56,151.42) -- (234.1,163.15) ; \draw  [color={rgb, 255:red, 208; green, 2; blue, 27 }  ,draw opacity=1 ][line width=1.5]  (234.1,151.42) -- (223.56,163.15) ;
\draw  [color={rgb, 255:red, 65; green, 117; blue, 5 }  ,draw opacity=1 ] (74.21,176.43) -- (69.12,177.45) -- (70.58,172.47) ;
\draw  [color={rgb, 255:red, 65; green, 117; blue, 5 }  ,draw opacity=1 ] (335.01,177.23) -- (329.92,178.25) -- (331.38,173.27) ;

% Text Node
\draw (44.4,39.02) node [anchor=north west][inner sep=0.75pt]  [font=\scriptsize]  {$U_{1}$};
% Text Node
\draw (93.07,76.08) node [anchor=north west][inner sep=0.75pt]  [font=\scriptsize]  {$U_{2}$};
% Text Node
\draw (90.13,152.75) node [anchor=north west][inner sep=0.75pt]  [font=\scriptsize]  {$U_{3}$};
% Text Node
\draw (40.8,187.42) node [anchor=north west][inner sep=0.75pt]  [font=\scriptsize]  {$U_{4}$};
% Text Node
\draw (174.8,38.73) node [anchor=north west][inner sep=0.75pt]  [font=\scriptsize]  {$U_{5}$};
% Text Node
\draw (223.47,75.79) node [anchor=north west][inner sep=0.75pt]  [font=\scriptsize]  {$U_{6}$};
% Text Node
\draw (221.33,152.46) node [anchor=north west][inner sep=0.75pt]  [font=\scriptsize]  {$U_{7}$};
% Text Node
\draw (171.2,187.13) node [anchor=north west][inner sep=0.75pt]  [font=\scriptsize]  {$U_{8}$};
% Text Node
\draw (305.2,38.53) node [anchor=north west][inner sep=0.75pt]  [font=\scriptsize]  {$U_{9}$};
% Text Node
\draw (352.67,75.59) node [anchor=north west][inner sep=0.75pt]  [font=\scriptsize]  {$U_{10}$};
% Text Node
\draw (349.33,151.46) node [anchor=north west][inner sep=0.75pt]  [font=\scriptsize]  {$U_{11}$};
% Text Node
\draw (299.6,185.33) node [anchor=north west][inner sep=0.75pt]  [font=\scriptsize]  {$U_{12}$};
% Text Node
\draw (425.44,111.6) node [anchor=north west][inner sep=0.75pt]  [font=\scriptsize] [align=left] {\begin{minipage}[lt]{23.75pt}\setlength\topsep{0pt}
\begin{flushright}
Server
\end{flushright}

\end{minipage}};
% Text Node
\draw (51.2,206.75) node [anchor=north west][inner sep=0.75pt]   [align=left] {\begin{minipage}[lt]{39.57pt}\setlength\topsep{0pt}
\begin{flushright}
Group 1
\end{flushright}

\end{minipage}};
% Text Node
\draw (181.6,206.46) node [anchor=north west][inner sep=0.75pt]   [align=left] {\begin{minipage}[lt]{39.57pt}\setlength\topsep{0pt}
\begin{flushright}
Group 2
\end{flushright}

\end{minipage}};
% Text Node
\draw (312,206.26) node [anchor=north west][inner sep=0.75pt]   [align=left] {\begin{minipage}[lt]{39.57pt}\setlength\topsep{0pt}
\begin{flushright}
Group 3
\end{flushright}

\end{minipage}};

\end{tikzpicture}
			}
		\caption{An example of how users $U_1,U_2,\dots,U_{12}$ in \swift are partitioned into 3 different groups, where $T=2$ users are semi-hones and $D=1$ user may drop out. \swift consists of two main phases: (1) intra-group communication, shown by the green directed lines; (2) inter-group communication, shown by the blue directed lines. Dashed lines indicate no communication occurs in this direction.}
		\label{basic_scheme}
	\end{figure}
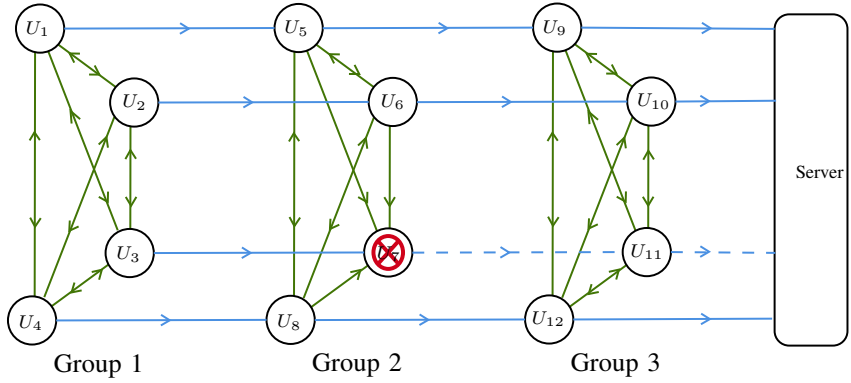
	\subsection{General case}
	In this subsection, we formally describe \swift. Consider a network consisting of one server and $N$ users, $U_1, U_2,\dots, U_N$, where  up to $T$ of them are semi-honest which may collude with each other to gain some information about other users. Furthermore, $D$ users may drop out, and their indices are denoted by $\mathcal{D}$.  The $n$th user contains its local model $\mathbf{W}_n\in \mathbb{F}^{L}$ and a set of random variables $\mathcal{Z}_n=\{\mathbf{Z}_{n,j},j\in[T] \}$ which are chosen independently and uniformly at random from $\mathbb{F}^{L}$. In this setting, the server here wants to recover the aggregated local models of the  surviving users, i.e., $\mathbf{W}=\sum_{n\in[N]\backslash\mathcal{D}}\mathbf{W}_{n}$, while the individual models remain private. To reach this goal, \swift takes the following steps.
	\begin{enumerate}
		\item {\bf Grouping:}
		The set of $N$ users are arbitrarily partitioned into $\Gamma$ groups with a size of $ \nu\triangleq D+T+1$, denoted by $\mathcal{G}_1, \mathcal{G}_2, \dots \mathcal{G}_{\Gamma}$, and it is divisible by $\nu$.  Figure \ref{fig3}  represents an overview of the proposed grouping method. We also order the users in the each group arbitrarily.  For simplicity, we refer to user $n$ based on its location in a group of users. If user $n$ is the $t$th user in group $\gamma$, we call it as user $(\gamma, t)$. 
		\item {\bf Intra-Group Secret Sharing and Aggregation:}
		User $n\in[N]$ forms the following polynomial.
		\begin{align}\label{Fn}
		\mathbf{F}_n(x)=\mathbf{W}_n+\sum\limits_{j=1}^{T}\mathbf{Z}_{n,j}x^j.
		\end{align}
		This polynomial function is designed such that $\mathbf{F}_n(0)=\mathbf{W}_n$. Each user uses its polynomial function $\mathbf{F}_n(.)$ to share its local model with other users.
		
		%In the next steps, each user uses %$\mathbf{f}_{n}(x)$ to share $\mathbf{W}_{n}$ with some users.
		
		Let $\alpha_t\in \mathbb{F}$, $t\in [\nu]$, are $\nu$ distinct constants. We assign   $\alpha_t$ to user $t$ of all groups, i.e., users $(\gamma,t)$, $\gamma=1,\ldots, \Gamma$. 
		
		In this step, each user $(\gamma,t)$  sends the evaluation of its polynomial function at $\alpha_{t'}$, i.e.,  $\mathbf{F}_{(\gamma,t)}(\alpha_{t'})$, to user $(\gamma,t')$, for $t'\in [\nu]$. 
		If a user $(\gamma,t)$ drops out and stays silent, $\mathbf{F}_{(\gamma,t)}(\alpha_{t'})$ is just presumed to be zero.
		
		Each user $(\gamma,t)$ calculates 
		\begin{align}
		\mathbf{Q}_{(\gamma,t)} = \sum_{t' \in [\nu]}  \mathbf{F}_{(\gamma, t')}(\alpha_{t}).
		\end{align}
		
		Note that in this phase, within each group, at most $\nu(\nu-1)$ communication take place. 
		\item  {\bf Inter-group Communication and Aggregation:} In this phase, user $t$ of group $\gamma$ calculates a message denoted by $\mathbf{S}_{(\gamma,t)}$ and  sends it to user $t$ of group $\gamma+1$, for $\gamma=1,\ldots, \Gamma-1$. 
		
		User $(1,t)$, $t\in [\nu]$, in group one sets
		\begin{align}\label{recursive1}
		\mathbf{S}_{(1,t)} & = \mathbf{Q}_{(1,t)}   
		\end{align}
		and sends $\mathbf{S}_{(1,t)}$ to user $(2,t)$. 
		
		User $(\gamma,t)$ in group $\gamma$, $2\le\gamma \le \Gamma-1$,
		calculates  $\mathbf{S}_{(\gamma,t)}$ as 
		\begin{align}\label{recursive2}
		\mathbf{S}_{(\gamma,t)}  = \mathbf{S}_{(\gamma-1,t)} +  \mathbf{Q}_{(\gamma,t)},
		\end{align}
		upon receiving $\mathbf{S}_{(\gamma-1,t)}$. If user $(\gamma,t)$ does not receive $\mathbf{S}_{(\gamma-1,t)}$, it also remains silent for the rest of the protocol.
		\item 	{\bf Communication with the Server:}
		User $t$ of the last group, i.e., user $(\Gamma, t)$
		computes
		\begin{align}\label{recursive3}
		\mathbf{S}_{(\Gamma,t)}  = \mathbf{S}_{(\Gamma-1,t)} +  \mathbf{Q}_{(\Gamma,t)},
		\end{align}
		and sends it to the server, for $t\in [\nu]$.  
		
		\item {\bf Recovering the result:}
		Having received the outcomes of a subset of users in $\mathcal{G}_{\Gamma}$ with a size of at least $T+1$, the server can recover the aggregated local models.
	\end{enumerate}
	\begin{figure}
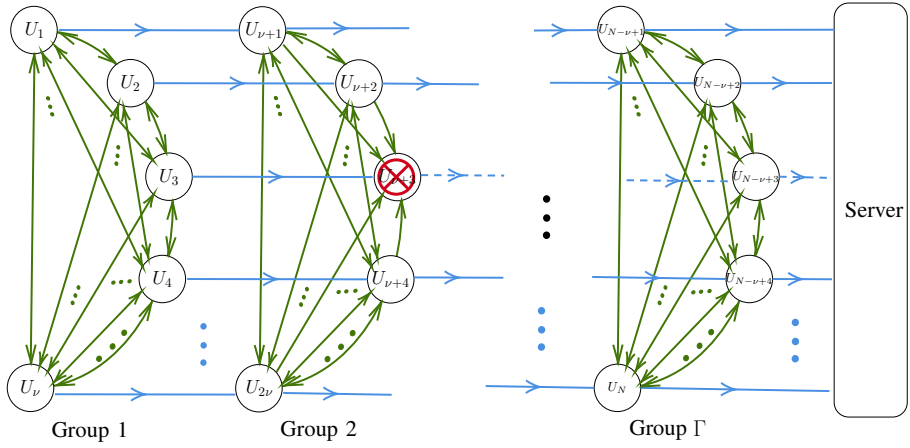

		\centering

		\tikzset{every picture/.style={line width=0.75pt}} %set default line width to 0.75pt        
		\scalebox{0.5}{

\tikzset{every picture/.style={line width=0.75pt}} %set default line width to 0.75pt        

% [inline block 0: 1 envs, 62323 chars -> data_tex | \begin{tikzpicture}[x=0.75pt,y=0.75pt,yscale=-1,xscale=1] %uncomment if require: \path (0,485); %set diagram left start ...]
	

		}
		\caption{An overview of the proposed setting in \swift, where $\nu=T+D+1$. The intra-group and inter-group communication links are shown in green and blue, respectively.}
		\label{fig3}
	\end{figure}
	\subsection{Proof of correctness}\label{proof_correct}
	
	To prove the correctness we must show that the server can recover $\sum_{n\in[N]\backslash\mathcal{D}}\mathbf{W}_{n}$ from the messages received from group $\Gamma$.  
	Using the recursive \cref{recursive1,recursive2,recursive3}, $\mathbf{S}_{(\gamma,t)}(\alpha_t)$ is either a null message, or it is equal to
	\begin{align}\label{S_gamma}
	\mathbf{S}_{(\gamma,t)}(\alpha_t)=\sum\limits_{\gamma'=1}^{\gamma}\sum\limits_{n\in \mathcal{G}_{\gamma'} \backslash \mathcal{D} }\hspace{-3mm}\mathbf{W}_n+\sum\limits_{j=1}^{T}{\alpha_t}^j\sum\limits_{\gamma'=1}^{\gamma}\sum\limits_{{n\in \mathcal{G}_{\gamma'} \backslash \mathcal{D} }
	}\mathbf{Z}_{n,j},
	\end{align}
	Thus if user $t$ in group $\Gamma$ sends a message to the server, it is equal to 	$\mathbf{S}_{(\Gamma,t)}(\alpha_t)$. From \eqref{S_gamma}, it is easy to see that $\mathbf{S}_{(\Gamma,t)}(\alpha_t)=\mathbf{F}(\alpha_t)$, where 
	\begin{align}\label{F}
	\mathbf{F}(x)=\sum\limits_{n\in[N]\backslash\mathcal{D}}\mathbf{W}_n+ \sum\limits_{j=1}^{T}x^j\sum\limits_{n\in[N]\backslash\mathcal{D}}\mathbf{Z}_{n,j}.
	\end{align}
	$\mathbf{F}(x)$ is a polynomial of degree $T$, with $\mathbf{F}(0)=\sum_{\substack{n\in[N]\backslash\mathcal{D}}}\mathbf{W}_{n}$.  Thus if the server receives at least $T+1$ messages from the last group, it can use Lagrange interpolation to recover  $\mathbf{F}(x)$ and $\sum_{n\in[N]\backslash\mathcal{D}}\mathbf{W}_{n}$.
	
	Recall that, in \swift, for any user $(\gamma,t)$ in $\mathcal{D}$, all the messages $\mathbf{S}_{(\gamma',t)}(\alpha_t)$, $\gamma \le \gamma' \le \Gamma$ is null. In particular, for any user $(\gamma,t)$ in $\mathcal{D}$,  $\mathbf{S}_{(\Gamma,t)}(\alpha_t)$ is null. Thus at most $D$ users in the last group send null messages to the server. Since the size of each group is $D+T+1$, the server receives at least $T+1$ values  $\mathbf{S}_{(\Gamma,t)}(\alpha_t)$ for distinct $\alpha_t$, and thus can recover $\mathbf{F}(x)$. 

	\subsection{ The communication loads}\label{proof_comm}
	According to \eqref{F}, the total number of messages that are needed to be received by the server is $(T+1)$. Thus, the normalized uplink communication load in \swift is $R_{\text{uplink}}^{(L)}=(T+1)$. In each group, at most $\nu(\nu-1)$ messages are sent by the members, and there are $\frac{N}{\nu}$ groups. In addition, at most $\nu$ messages are sent between two consecutive groups. Thus, the normalized user-to-user communication load of \swift is upper-bounded as $R_{\text{user}}^{(L)}\le(N-1)\nu$, where $\nu=T+D+1$.	
	\subsection{Proof of privacy}\label{basic_privacy}
In this section, we prove that \swift satisfies the privacy constraint in \eqref{serve-prv}. The privacy must be guaranteed even if the server colludes with any set $\mathcal{T}\subset [N]$ of at most $T$ semi-honest users which can distribute arbitrary across the groups. 
At a high level, we expand the mutual information in \eqref{serve-prv} over the groups containing the semi-honest users, from the first group to the last one, and show that model privacy will be preserved at each expansion step.
	\begin{corollary}\label{coro4}
		Assume that user $U_n$ is denoted by $(\gamma,t)$.  In \swift, the local model of $U_n$, is shared using polynomial function $\mathbf{F}_n(x)$ in \eqref{Fn}. In other words, $\mathbf{F}_{(\gamma,t)}(\alpha_{t'})$ for $t'\in[\nu]\backslash \{t\}$ are delivered to user $(\gamma,t')$.
		According to \eqref{Fn} and directly from the privacy guarantee in Shamir's sharing, we have
		$I(\mathbf{W}_{n};\{ \mathbf{F}_n(\alpha_{t'}).t'\in\mathcal{T}\})=0.$
	\end{corollary}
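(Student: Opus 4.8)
The plan is to recognize this as essentially a direct instantiation of Lemma~\ref{lemma3}, after matching the shares that the colluding users hold about $\mathbf{W}_n$ with the evaluation vector $\tilde{\mathbf{H}}$ appearing there. First I would unpack the definitions: by \eqref{Fn} together with ${\mathbf{Z}'}_n^{(t')}\triangleq\sum_{j=1}^{T}\mathbf{Z}_{n,j}\alpha_{t'}^{j}$, the share $\mathbf{F}_{(\gamma,t)}(\alpha_{t'})$ delivered to user $(\gamma,t')$ is exactly $\mathbf{W}_n+{\mathbf{Z}'}_n^{(t')}$. Hence the collection $\{\mathbf{W}_n+{\mathbf{Z}'}_n^{(t')}:t'\in\mathcal{T}\}$ is precisely the set of evaluations $\{\mathbf{F}_n(\alpha_{t'}):t'\in\mathcal{T}\}$ of the degree-$T$ masking polynomial at the distinct points $\{\alpha_{t'}:t'\in\mathcal{T}\}$ (distinctness follows from the $\alpha_t$ being chosen distinct in the construction).

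Next I would align this with Lemma~\ref{lemma3} by setting $\mathbf{Y}=\mathbf{W}_n$ and $\mathbf{R}_j=\mathbf{Z}_{n,j}$, so that $\mathbf{H}(x)=\mathbf{F}_n(x)$ and $\beta_i=\alpha_{t'}$. The masking coefficients $\mathbf{Z}_{n,1},\dots,\mathbf{Z}_{n,T}$ are \iid uniform on $\mathbb{F}^{L}$ and independent of $\mathbf{W}_n$ by the problem setup, which is exactly the hypothesis the lemma requires. Thus $\mathbf{W}_n$ is independent of the evaluations, giving the claimed $I(\mathbf{W}_n;\{\mathbf{W}_n+{\mathbf{Z}'}_n^{(t')}:t'\in\mathcal{T}\})=0$ whenever $|\mathcal{T}|=T$.

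The one point that needs care is that Lemma~\ref{lemma3} is stated for exactly $T$ evaluation points, whereas $|\mathcal{T}|\le T$, so we may have strictly fewer. I would handle this by padding: augment $\{\alpha_{t'}:t'\in\mathcal{T}\}$ with arbitrary additional distinct field elements (different from $0$ and from the existing points) to obtain a set of exactly $T$ distinct evaluation points, apply the lemma to conclude independence of $\mathbf{W}_n$ from the full length-$T$ evaluation vector, and then invoke the data-processing inequality---since the shares indexed by $\mathcal{T}$ are a deterministic coordinate projection of the padded vector---to deduce independence for the sub-collection as well. I expect this padding/data-processing step to be the only genuine obstacle; the remainder is bookkeeping to confirm that $\mathbf{F}_n$ and its coefficients meet the lemma's hypotheses.
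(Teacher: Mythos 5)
Your proposal is correct and follows exactly the route the paper intends: the corollary is asserted as a direct instantiation of Lemma~\ref{lemma3} with $\mathbf{Y}=\mathbf{W}_n$, $\mathbf{R}_j=\mathbf{Z}_{n,j}$, and evaluation points $\{\alpha_{t'}\}_{t'\in\mathcal{T}}$, which is precisely your identification. Your padding/data-processing step for the case $\lvert\mathcal{T}\rvert<T$ is a sound piece of extra rigor that the paper leaves implicit.
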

	Let the random part of inter-group message $\mathbf{S}_{(\gamma,t)}$ of user $(\gamma,t)$, consisting of the random noises of non-dropped and honest users in the groups $\gamma'\le\gamma$, be denoted by $\tilde{\mathbf{Z}}_{(\gamma,t)}$, i.e.,
	\begin{align}\label{z_tild}
	    \tilde{\mathbf{Z}}_{(\gamma,t)}\triangleq\sum_{j=1}^{T}{\alpha_t}^{j}\sum_{\gamma'\in[\gamma]}\sum_{{n\in \mathcal{G}_{\gamma'} \backslash \{\mathcal{D}\cup \mathcal{T}\} }}\mathbf{Z}_{n,j}.
	\end{align}
We show in the following lemma, that the random noise in the message sent from group $\gamma$ to group $\gamma+1$ is independent of the random noise in the message sent from group $\gamma+1$ to group $\gamma+2$. 
\begin{lemma}\label{lemma1}
For all $\gamma\in[\Gamma]$ and $t\in[\nu]$, we have 
%\begin{align}
   $ I\big(\tilde{\mathbf{Z}}_{(\gamma,t)};\tilde{\mathbf{Z}}_{(\gamma+1,t)}\big)=0.$
%\end{align}
\end{lemma}
	
\begin{proof}
	     Let us define
	     \begin{align}\label{z'}
	         {\mathbf{Z}'}_{n}^{(t)}\triangleq\sum_{j=1}^{T}\mathbf{Z}_{n,j}\alpha_{t}^{j},
	     \end{align}
	     for $n\in[N]$ and $t\in[\nu]$. In each group, there are $\nu=T+D+1$ users each of which uses $T$ random vectors chosen uniformly and independently from $\mathbb{F}^{L}$ in its shares. In addition, we have
	   $ \tilde{\mathbf{Z}}_{(\gamma+1,t)}=\tilde{\mathbf{Z}}_{(\gamma,t)}+\sum_{i\in\mathcal{G}_{\gamma+1}\backslash\{\mathcal{D}\cup \mathcal{T}\}}{\mathbf{Z}'}_i^{(t)}$. Since $\mathcal{G}_{\gamma+1}\backslash\{\mathcal{D}\cup \mathcal{T}\} \neq \emptyset$, and 
	    the random vectors $\{{\mathbf{Z}'}_i^{(t)}\}_{i\in\mathcal{G}_{\gamma+1}\backslash\{\mathcal{D}\cup \mathcal{T}\}}$ are i.i.d., we have that $ I\big(\tilde{\mathbf{Z}}_{(\gamma,t)};\tilde{\mathbf{Z}}_{(\gamma+1,t)}\big)=0$ for $\gamma\in[\Gamma]$.
	\end{proof}
	Assume that the semi-honest users are denoted by $\tilde{U}_1, \tilde{U}_2,\dots,\tilde{U}_T$. We denote the indices of theses semi-honest users as $(\gamma_1,t_1),(\gamma_2,t_2),\dots,(\gamma_T,t_T)$ respectively, where $1\le\gamma_1\le\gamma_2\le\dots\le\gamma_T\le\Gamma$. 
	 We also denote the set of indices of honest users in group $\gamma$, $\gamma\in[\Gamma]$, by $\mathcal{H}_{\gamma}\triangleq\{n:U_n\in\mathcal{G}_{\gamma}\backslash\{\mathcal{T}\cup \mathcal{D}\}\}$.
	 
	     Let us define the set of messages which are received by $\tilde{U}_{i}$ by $\mathcal{M}_{\tilde{U}_{i}}$ which consists of two kinds of messages. Particularly, $\mathcal{M}_{\tilde{U}_{i}}=\big\{\{\mathbf{F}_{(n,t_i)},n\in\mathcal{H}_{\gamma_i}\},\mathbf{S}_{(\gamma_i-1,t_i)}  \big\}$,
	    where
 	    $ \{\mathbf{F}_{(n,t_i)},n\in\mathcal{H}_{\gamma_i}\}
 	    =\{\mathbf{W}_n+{\mathbf{Z}'}_{n}^{(t_i)}, n\in\mathcal{H}_{\gamma_i}\}$,
% % 	\begin{align}\label{intra-comm}
% % 	   \{\mathbf{F}_{(n,t_i)},n\in\mathcal{H}_{\gamma_i}\}=\{\mathbf{W}_n+{\mathbf{Z}'}_{n}^{(t_i)}, n\in\mathcal{H}_{\gamma_i}\},
% % 	\end{align}
 	is a set of intra-group messages, and $ \mathbf{S}_{(\gamma_i-1,t_i)}=\sum_{\gamma'\in[\gamma_i-1]}\sum_{m\in \mathcal{H}_{\gamma'}} \mathbf{W}_m+\tilde{\mathbf{Z}}_{(\gamma_i-1,t_i)},$
 	  is the message received from group $\gamma_i-1$. 
 	  
 According to the Shamir's secret sharing scheme, definitions in \eqref{z_tild},\eqref{z'},  and the fact that the random vectors are chosen uniformly and independently at random from $\mathbb{F}^{L}$, we can easily prove the following lemmas.
 
\begin{lemma}\label{lemma2}
For each user $(\gamma_i,t_i)$ in \swift we have $I\big( {\tilde{\mathbf{Z}}}_{(\gamma_i-1,t_i)};\{{\mathbf{Z}'}_n^{(t_i)},n\in\mathcal{H}_{\gamma_i}\} \big)=0$, for all $t_i\in[\nu]$ and $\gamma_i\in[\Gamma]$.
\end{lemma}
	 \begin{lemma}\label{lemma3}
	 Consider user $(\gamma_i,t_i)$ and user $(\tilde{\gamma}_i,\tilde{t}_i)$. Then, $I\big( \{{\mathbf{Z}'}_n^{(t_i)},n\in\mathcal{H}_{\gamma_i}\}; \{{\mathbf{Z}'}_{{n}}^{(\tilde{t}_i)},{n}\in\mathcal{H}_{\tilde{\gamma}_i}\}\big)=0$ for $\gamma_i\neq \tilde{\gamma}_i$, $t_i,\tilde{t}_i\in[\nu]$.
	 \end{lemma} 
	 \begin{lemma}\label{lemma4}
 For any $\gamma_i\in[\Gamma]$ consider user $n\in\mathcal{H}_{\gamma_i}$. Then
 for all $t_i\in[\nu]$, $I\big( {\mathbf{Z}'}_n^{(t_i)};\{{\mathbf{Z}'}_n^{(t)},t\in\mathcal{T}'\}\big)=0$, where $\mathcal{T}'\subset[\nu]\backslash \{t_i\}$, and $|\mathcal{T}'|\le T-1$. Similarly, $I\big( \tilde{\mathbf{Z}}_{(\gamma_i,t_i)};\{\tilde{\mathbf{Z}}_{(\gamma_i,t)},t\in\mathcal{T}'\}\big)=0$.
\end{lemma}
Let us define $\mathcal{W}_{N\backslash\mathcal{T}}\triangleq\{\mathbf{W}_n,{n\in[N]}\backslash{\mathcal{T}}\}$, $\mathcal{K}_{N,\mathcal{T}}\triangleq\{ \{\mathbf{W}_k,\mathcal{Z}_k,{k\in\mathcal{T}}\},\sum_{n\in[N]\backslash\{\mathcal{D}\cup\mathcal{T}\}}{\mathbf{W}_n}\}$, and  $\mathcal{M}_{\mathcal{T}}~\triangleq~\bigcup_{i\in[T]}\mathcal{M}_{\tilde{U}_i}$. In addition, the set of messages that the server receives from users in group $\Gamma$ are represented by $\mathcal{S}_{\Gamma}~\triangleq~\{\mathbf{S}_{(\Gamma,t)},{t\in[\nu]\backslash\mathcal{D}}\}$.
From the definition of privacy constraint, we have 
	\begin{align}\nonumber
	I\big(&\mathcal{W}_{N\backslash\mathcal{T}};\mathcal{M}_{\mathcal{T}},\mathcal{S}_{\Gamma},\{\mathbf{W}_k,\mathcal{Z}_k,{k\in\mathcal{T}}\}\big|\sum\limits_{n\in[N]\backslash\{\mathcal{D}\cup\mathcal{T}\}}{\mathbf{W}_n} \big)\\\nonumber
	=&I\big(\mathcal{W}_{N\backslash\mathcal{T}};\{\mathbf{W}_k,\mathcal{Z}_k,{k\in\mathcal{T}}\}\big|\sum\limits_{n\in[N]\backslash\{\mathcal{D}\cup\mathcal{T}\}}{\mathbf{W}_n} \big)+I\big(\mathcal{W}_{N\backslash\mathcal{T}};\mathcal{M}_{\mathcal{T}},\mathcal{S}_{\Gamma}\big|\sum\limits_{n\in[N]\backslash\{\mathcal{D}\cup\mathcal{T}\}}{\mathbf{W}_n},\{\mathbf{W}_k,\mathcal{Z}_k,{k\in\mathcal{T}}\} \big)\\\label{25}
	\stackrel{\text{(a)}}{=}&I\big(\mathcal{W}_{N\backslash\mathcal{T}};\mathcal{M}_{\mathcal{T}},\mathcal{S}_{\Gamma}\big|\sum\limits_{n\in[N]\backslash\{\mathcal{D}\cup\mathcal{T}\}}{\mathbf{W}_n},\{\mathbf{W}_k,\mathcal{Z}_k,{k\in\mathcal{T}}\} \big),
		\end{align}
	where in $(a)$ we use the independence of the local models and independence of random vectors from the local models.

	\begin{lemma}\label{lemma7}
	Let $(\gamma_1,t_1),(\gamma_2,t_2),\dots,(\gamma_T,t_T)$ be $T$ semi-honest users, where $1\le\gamma_1\le\gamma_2\le\dots\le\gamma_T\le\Gamma$ and $t_i\in[\nu]$ for $i\in[T]$. Then, for $i\in[T]$ we have
	\begin{align*}
	  	I\big(\mathcal{W}_{N\backslash\mathcal{T}};\mathcal{M}_{\tilde{U}_i}\big|\mathcal{K}_{N,\mathcal{T}},\{\mathcal{M}_{\tilde{U}_j},j\in[i-1]\} \big)=0. 
	\end{align*}

\begin{proof}
	    For $i=1$, using the independence of the local models and random variables, we have 
 	    \begin{align*}
	        I\big(\mathcal{W}_{N\backslash\mathcal{T}};\mathcal{M}_{\tilde{U}_1}\big|\mathcal{K}_{N,\mathcal{T}} \big) =&I\big(\mathcal{W}_{N\backslash\mathcal{T}};\{\mathbf{F}_{(n,t_1)}, n\in\mathcal{H}_{\gamma_1}\},\mathbf{S}_{(\gamma_1-1,t_1)}\big|\mathcal{K}_{N,\mathcal{T}} \big)\\
	        =&H\big(\{\mathbf{F}_{(n,t_1)}, n\in\mathcal{H}_{\gamma_1}\},\mathbf{S}_{(\gamma_1-1,t_1)}\big|\mathcal{K}_{N,\mathcal{T}} \big)-H\big(\{\mathbf{F}_{(n,t_1)}, n\in\mathcal{H}_{\gamma_1}\},\mathbf{S}_{(\gamma_1-1,t_1)}\big|\mathcal{K}_{N,\mathcal{T}},\mathcal{W}_{N\backslash\mathcal{T}}\big)\\
	        \leq &H\big(\{\mathbf{F}_{(n,t_1)}, n\in\mathcal{H}_{\gamma_1}\},\mathbf{S}_{(\gamma_1-1,t_1)}\big)
	    -H\big(\{{\mathbf{Z}'}_n^{(t_1)},n\in\mathcal{H}_{\gamma_1}\},\tilde{\mathbf{Z}}_{(\gamma_1-1,t_1)} \big)\le 0.
	    \end{align*}
	    The last term follows from the fact that 
	    $H\big(\{\mathbf{F}_{(n,t_1)}, n\in\mathcal{H}_{\gamma_1}\},\mathbf{S}_{(\gamma_1-1,t_1)}\big)$ and
	    $H\big(\{{\mathbf{Z}'}_n^{(t_1)},n\in\mathcal{H}_{\gamma_1}\},\tilde{\mathbf{Z}}_{(\gamma_1-1,t_1)} \big)$
	    have the same size and uniform variables maximize the entropy.  Therefore, $I\big(\mathcal{W}_{N\backslash\mathcal{T}};\mathcal{M}_{\tilde{U}_1}\big|\mathcal{K}_{N,\mathcal{T}} \big)=0$.

For $i\in[2:T]$ we have
	    \begin{align}\nonumber
	        I\big(&\mathcal{W}_{N\backslash\mathcal{T}};\mathcal{M}_{\tilde{U}_i}\big|\mathcal{K}_{N,\mathcal{T}},\{\mathcal{M}_{\tilde{U}_j},j\in[i-1]\} \big)\\\nonumber
	        =&I\big(\mathcal{W}_{N\backslash\mathcal{T}};\{\mathbf{F}_{(n,t_i)}, n\in\mathcal{H}_{\gamma_i}\},\mathbf{S}_{(\gamma_i-1,t_i)}\big|\mathcal{K}_{N,\mathcal{T}},
	         \big\{\{\mathbf{F}_{(n,t_{i-\ell})}, n\in\mathcal{H}_{\gamma_{i-\ell}}\},\mathbf{S}_{(\gamma_{i-\ell}-1,t_{i-\ell})},\ell\in[i-1]\big\} \big)\\\nonumber
	        =&I\big(\mathcal{W}_{N\backslash\mathcal{T}};\{\mathbf{F}_{(n,t_i)}, n\in\mathcal{H}_{\gamma_i}\}\big|\mathcal{K}_{N,\mathcal{T}},
	         \big\{\{\mathbf{F}_{(n,t_{i-\ell})}, n\in\mathcal{H}_{\gamma_{i-\ell}}\},\mathbf{S}_{(\gamma_{i-\ell}-1,t_{i-\ell})},\ell\in[i-1]\big\} \big)\\\label{eq_I}
	        &+I\big(\mathcal{W}_{N\backslash\mathcal{T}};\mathbf{S}_{(\gamma_i-1,t_i)}\big|\mathcal{K}_{N,\mathcal{T}},\{\mathbf{F}_{(n,t_i)}, n\in\mathcal{H}_{\gamma_i}\},
	         \big\{\{\mathbf{F}_{(n,t_{i-\ell})}, n\in\mathcal{H}_{\gamma_{i-\ell}}\},\mathbf{S}_{(\gamma_{i-\ell}-1,t_{i-\ell})},\ell\in[i-1]\big\} \big)
	         \end{align}
	          Using the definition of mutual information, \eqref{eq_I} can be written as
	         \begin{align}
	        &H\big(\{\mathbf{F}_{(n,t_i)}, n\in\mathcal{H}_{\gamma_i}\}\big|\mathcal{K}_{N,\mathcal{T}},\big\{\{\mathbf{F}_{(n,t_{i-\ell})}, n\in\mathcal{H}_{\gamma_{i-\ell}}\},
	        \mathbf{S}_{(\gamma_{i-\ell}-1,t_{i-\ell})},\ell\in[i-1]\big\}\big)\\\nonumber
	        &- H\big(\{\mathbf{F}_{(n,t_i)}, n\in\mathcal{H}_{\gamma_i}\}\big|\mathcal{K}_{N,\mathcal{T}}, \big\{\{\mathbf{F}_{(n,t_{i-\ell})}, n\in\mathcal{H}_{\gamma_{i-\ell}}\},
	         \mathbf{S}_{(\gamma_{i-\ell}-1,t_{i-\ell})},\ell\in[i-1]\big\},\mathcal{W}_{N\backslash\mathcal{T}}\big)\\\nonumber
	        &+H\big(\mathbf{S}_{(\gamma_i-1,t_i)}\big|\mathcal{K}_{N,\mathcal{T}},\{\mathbf{F}_{(n,t_i)}, n\in\mathcal{H}_{\gamma_i}\},
	         \big\{\{\mathbf{F}_{(n,t_{i-\ell})}, n\in\mathcal{H}_{\gamma_{i-\ell}}\},\mathbf{S}_{(\gamma_{i-\ell}-1,t_{i-\ell})},\ell\in[i-1]\big\}\big)\\\nonumber
	        &-H\big(\mathbf{S}_{(\gamma_i-1,t_i)}\big|\mathcal{K}_{N,\mathcal{T}},\{\mathbf{F}_{(n,t_i)}, n\in\mathcal{H}_{\gamma_i}\},
	        \big\{\{\mathbf{F}_{(n,t_{i-\ell})}, n\in\mathcal{H}_{\gamma_{i-\ell}}\},\mathbf{S}_{(\gamma_{i-\ell}-1,t_{i-\ell})},\ell\in[i-1]\big\},\mathcal{W}_{N\backslash\mathcal{T}}\big)\\\label{eq18}
	        \stackrel{\text{(a)}}{\leq }&H\big(\{\mathbf{F}_{(n,t_i)}, n\in\mathcal{H}_{\gamma_i}\}\big)-H\big(\{{\mathbf{Z}'}_n^{(t_i)}, n\in\mathcal{H}_{\gamma_i}\}\big|\mathcal{K}_{N,\mathcal{T}}, \big\{\{{\mathbf{Z}'}_n^{(t_{i-\ell})}, n\in\mathcal{H}_{\gamma_{i-\ell}}\},
	         {\tilde{\mathbf{Z}}}_{(\gamma_{i-\ell}-1,t_{i-\ell})},\ell\in[i-1]\big\},\mathcal{W}_{N\backslash\mathcal{T}}\big)\\\nonumber
	       & +H\big(\mathbf{S}_{(\gamma_i-1,t_i)}\big)-H\big({\tilde{\mathbf{Z}}}_{(\gamma_i-1,t_i)}\big|\mathcal{K}_{N,\mathcal{T}},\{{\mathbf{Z}'}_n^{(t_i)}, n\in\mathcal{H}_{\gamma_i}\},
	        \big\{\{{\mathbf{Z}'}_n^{(t_{i-\ell})}, n\in\mathcal{H}_{\gamma_{i-\ell}}\},{\tilde{\mathbf{Z}}}_{(\gamma_{i-\ell}-1,t_{i-\ell})},\ell\in[i-1]\big\},\mathcal{W}_{N\backslash\mathcal{T}}\big),
	        \end{align}
	        where in (a) the first and the third terms follow from the fact that $H(X|Y)\le H(X)$. Now we show that
	        \begin{align}\label{eq19}
	            I\big(\{{\mathbf{Z}'}_n^{(t_i)}, n\in\mathcal{H}_{\gamma_i}\};\mathcal{K}_{N,\mathcal{T}}, \big\{\{{\mathbf{Z}'}_n^{(t_{i-\ell})}, n\in\mathcal{H}_{\gamma_{i-\ell}}\},
	         {\tilde{\mathbf{Z}}}_{(\gamma_{i-\ell}-1,t_{i-\ell})},\ell\in[i-1]\big\},\mathcal{W}_{N\backslash\mathcal{T}}\big)=0.
	        \end{align}
	        From definition of mutual information, we have
	        \begin{align*}
	           I\big(&\{{\mathbf{Z}'}_n^{(t_i)}, n\in\mathcal{H}_{\gamma_i}\};\mathcal{K}_{N,\mathcal{T}}, \big\{\{{\mathbf{Z}'}_n^{(t_{i-\ell})}, n\in\mathcal{H}_{\gamma_{i-\ell}}\},
	         {\tilde{\mathbf{Z}}}_{(\gamma_{i-\ell}-1,t_{i-\ell})},\ell\in[i-1]\big\},\mathcal{W}_{N\backslash\mathcal{T}}\big)\\
	         =& H\big( \mathcal{K}_{N,\mathcal{T}}, \big\{\{{\mathbf{Z}'}_n^{(t_{i-\ell})}, n\in\mathcal{H}_{\gamma_{i-\ell}}\},
	         {\tilde{\mathbf{Z}}}_{(\gamma_{i-\ell}-1,t_{i-\ell})},\ell\in[i-1]\big\},\mathcal{W}_{N\backslash\mathcal{T}}\big)\\
	         &-H\big( \mathcal{K}_{N,\mathcal{T}}, \big\{\{{\mathbf{Z}'}_n^{(t_{i-\ell})}, n\in\mathcal{H}_{\gamma_{i-\ell}}\},
	         {\tilde{\mathbf{Z}}}_{(\gamma_{i-\ell}-1,t_{i-\ell})},\ell\in[i-1]\big\},\mathcal{W}_{N\backslash\mathcal{T}}\big|\{{\mathbf{Z}'}_n^{(t_i)}, n\in\mathcal{H}_{\gamma_i}\} \big)=0,
	        \end{align*}
	        where the last equality holds due to Lemma \ref{lemma2}, Lemma~\ref{lemma3}, Lemma~\ref{lemma4}, and the independence of local models and random vectors.
	        \begin{lemma}\label{lemma6}
	        For $\gamma_i\in[\Gamma]$, consider ${{\mathcal{Z}}'}_{{{\tilde{\mathcal{H}}}_i}}=\{{\mathbf{Z}'}_n^{(t)},n\in{\tilde{\mathcal{H}}}_i,t\in\tilde{\mathcal{T}}\}$,  where ${\tilde{\mathcal{H}}}_i$ is a subset of $\{\mathcal{H}_j,j\in[\gamma_{i}-1]\}$ of size up to $T-1$, and $|\tilde{\mathcal{T}}|\le T-1$. Then, $I\big({\tilde{\mathbf{Z}}}_{(\gamma_i,t)};{{\mathcal{Z}}'}_{{{\tilde{\mathcal{H}}}_i}}\big)=0$.
	        \begin{proof}
	   We can consider two cases: (I) If there is at least one group like $\gamma'_i\in[\gamma_i-1]$ such that $\mathcal{H}_{\gamma'_i}\notin{\tilde{\mathcal{H}}}_i$ then we can conclude that
	      $I\big({\tilde{\mathbf{Z}}}_{(\gamma_i,t)};{{\mathcal{Z}}'}_{{{\tilde{\mathcal{H}}}_i}}\big)=0$.
	     The reason is that there is a non-empty set of honest and non-dropped users that  ${\tilde{\mathbf{Z}}}_{(\gamma_i,t)}$  includes a summation of their i.i.d. random vectors, (II) If ${\tilde{\mathcal{H}}}_i=\{\mathcal{H}_j,j\in[\gamma_i-1]\}$,  then based on ramp secret sharing we have  $I\big({\tilde{\mathbf{Z}}}_{(\gamma_i,t)};{{\mathcal{Z}}'}_{{{\tilde{\mathcal{H}}}_i}}\big)=0$.
	        \end{proof}
	    \end{lemma}
	      Similar to \eqref{eq19}, using Lemma~\ref{lemma1}, Lemma~\ref{lemma2}, Lemma~\ref{lemma4}, and  Lemma~\ref{lemma6} we can proof
	      \begin{align}\label{eq20}
	         I\big({\tilde{\mathbf{Z}}}_{(\gamma_i-1,t_i)};\mathcal{K}_{N,\mathcal{T}},\{{\mathbf{Z}'}_n^{(t_i)}, n\in\mathcal{H}_{\gamma_i}\},
	        \big\{\{{\mathbf{Z}'}_n^{(t_{i-\ell})}, n\in\mathcal{H}_{\gamma_{i-\ell}}\},{\tilde{\mathbf{Z}}}_{(\gamma_{i-\ell}-1,t_{i-\ell})},\ell\in[i-1]\big\},\mathcal{W}_{N\backslash\mathcal{T}}\big)=0.
	      \end{align}
	         Using \eqref{eq19} and \eqref{eq20}, \eqref{eq18} can be written as 
	         \begin{align*}
	            H\big(\{\mathbf{F}_{(n,t_i)}, n\in\mathcal{H}_{\gamma_i}\}\big)-H\big(\{{\mathbf{Z}'}_n^{(t_i)}, n\in\mathcal{H}_{\gamma_i}\}\big)+H\big(\mathbf{S}_{(\gamma_i-1,t_i)}\big)-H\big({\tilde{\mathbf{Z}}}_{(\gamma_i-1,t_i)}\big)\le 0, 
	         \end{align*}
	        where the last term follows from the fact that uniform variables maximize entropy.
	        Thus, $ I\big(\mathcal{W}_{N\backslash\mathcal{T}};\mathcal{M}_{\tilde{U}_i}\big|\mathcal{K}_{N,\mathcal{T}},\{\mathcal{M}_{\tilde{U}_j},j\in[i-1]\} \big)=0$. 
	       	\end{proof}
	\end{lemma}
	       %% \songze{How we get $H\big(\{{\mathbf{Z}'}_n^{(t_i)}, n\in\mathcal{H}_{\gamma_i}\}\big|\mathcal{K}_{N,\mathcal{T}}, \big\{\{{\mathbf{Z}'}_n^{(t_{i-\ell})}, n\in\mathcal{H}_{\gamma_{i-\ell}}\},
	       %%  {\tilde{\mathbf{Z}}}_{(\gamma_{i-\ell}-1,t_{i-\ell})},\ell\hspace{-1mm}\in\hspace{-1mm}[i-1]\big\},\mathcal{W}_{N\backslash\mathcal{T}}\big) = H\big(\{{\mathbf{Z}'}_n^{(t_i)}, n\in\mathcal{H}_{\gamma_i}\}\big)$ if $\gamma_i = \gamma_{i-\ell}$?}
	       %  and (b) follows from the above mentioned facts. In addition, the last term follows from the fact that uniform variables maximize entropy. 
	  
	        %\songze{Can we explain why each term equals to 0 separately?}
	      %  =&I\bigg(\mathcal{W}_{N\backslash\mathcal{T}};\{\mathbf{W}_n+{\mathbf{Z}'}_{n}^{(t_i)}, n\in\mathcal{H}_{\gamma_i}\}, \\ &\sum_{m\in\bigcup_{j=1}^{\gamma_i-1}\mathcal{H}_j}\hspace{-4mm}\mathbf{W}_m+\tilde{\mathbf{Z}}_{(\gamma_{i-1}-1,t_i)}+\hspace{-4mm}\sum\limits_{m\in\bigcup_{j'=\gamma_{i-1}}^{\gamma_{i}-1}\mathcal{H}_{j'}}\hspace{-4mm}{\mathbf{Z}'}_{m}^{(t_i)}\bigg|\mathcal{K}_{N,\mathcal{T}},  \textup{\songze{Why separate noise term here?}}
	       % & \big\{\{\mathbf{W}_{n'}+{\mathbf{Z}'}_{n'}^{(t_{i-\ell})}\hspace{-2mm}, n'\hspace{-1mm}\in\mathcal{H}_{\gamma_{i-\ell}}\},\hspace{-6mm}\sum_{m'\in\bigcup_{j=1}^{\gamma_{i-\ell}-1}\mathcal{H}_j}\hspace{-6mm}\mathbf{W}_{m'}+\tilde{\mathbf{Z}}_{(\gamma_{i-\ell}-1,t_{i-\ell})},\ell\hspace{-1mm}\in\hspace{-1mm}[i-1]\bigg)\\
	        %=& 0.

%	    \songze{Not very clear. Better to elaborate more.}

	According to Lemma~\ref{lemma7}, \eqref{25} can be written as follows.
	\begin{align}\nonumber
	    I\big(&\mathcal{W}_{N\backslash\mathcal{T}};\mathcal{M}_{\mathcal{T}},\mathcal{S}_{\Gamma}\big|\mathcal{K}_{N,\mathcal{T}} \big) \nonumber\\
	    =&\sum_{i=1}^TI\big(\mathcal{W}_{N\backslash\mathcal{T}};\mathcal{M}_{\tilde{U}_i}\big|\mathcal{K}_{N,\mathcal{T}},\{\mathcal{M}_{\tilde{U}_j},j\in[i-1]\} \big)+I\big(\mathcal{W}_{N\backslash\mathcal{T}}; \mathcal{S}_{\Gamma}\big|\mathcal{K}_{N,\mathcal{T}}, \mathcal{M}_{\mathcal{T}}\big)\nonumber\\
	    =&I\big(\mathcal{W}_{N\backslash\mathcal{T}}; \{\tilde{\mathbf{Z}}_{(\Gamma,t')},t'\in[\nu]\backslash\{\mathcal{T}\cup\mathcal{D}\}\}\big|\mathcal{K}_{N,\mathcal{T}}, \mathcal{M}_{\mathcal{T}}\big)\nonumber\\ \label{28}
	    =&H\big(\{\tilde{\mathbf{Z}}_{(\Gamma,t')},t'\in[\nu]\backslash\{\mathcal{T}\cup\mathcal{D}\}\}\big|\mathcal{K}_{N,\mathcal{T}}, \mathcal{M}_{\mathcal{T}}\big)
	    -H\big(\{\tilde{\mathbf{Z}}_{(\Gamma,t')},t'\in[\nu]\backslash\{\mathcal{T}\cup\mathcal{D}\}\}\big|\mathcal{K}_{N,\mathcal{T}}, \mathcal{R}_{\mathcal{T}},\mathcal{W}_{N\backslash\mathcal{T}}\big)=0,
	\end{align}
		where  $\mathcal{R}_{\mathcal{T}}\triangleq\bigcup_{i\in[T]}\mathcal{R}_{\tilde{U}_i}$, and 
  $\mathcal{R}_{\tilde{U}_i}~\triangleq~\big\{\{{\mathbf{Z}'}_n^{(t_i)},n\in\mathcal{H}_{\gamma_i}\},\tilde{\mathbf{Z}}_{(\gamma_i-1,t_i)}\big\}$. 
  Using argument similar to that in the proof of Lemma~\ref{lemma7}, independence of local models and random vectors, and according to Lemma~\ref{lemma1}, Lemma~\ref{lemma4} and Lemma~\ref{lemma6}, both terms in \eqref{28} are equal to $H\big(\{\tilde{\mathbf{Z}}_{(\Gamma,t')},t'\in[\nu]\backslash\{\mathcal{T}\cup\mathcal{D}\}\}\big)$ and the result is 0. Therefore, the privacy constraint is satisfied, i.e.,
  \begin{align*}
    I\big(\mathcal{W}_{N\backslash\mathcal{T}};\mathcal{M}_{\mathcal{T}},\mathcal{S}_{\Gamma},\{\mathbf{W}_k,\mathcal{Z}_k,{k\in\mathcal{T}}\}\big|\sum_{n\in[N]\backslash\{\mathcal{D}\cup\mathcal{T}\}}{\mathbf{W}_n} \big)=0,
\end{align*}
  	which proves the privacy constraint.

% 	If $T$ semi-honest users are located in group $\Gamma$, each term in \eqref{28} is equal to 0, as for any $t'\in[\nu]\backslash\{\mathcal{T}\cup\mathcal{D}\}$, $\{\tilde{\mathbf{Z}}_{(\Gamma,t')},t'\in[\nu]\backslash\{\mathcal{T}\cup\mathcal{D}\}\}$ can be uniquely reconstructed from $(\mathcal{K}_{N,\mathcal{T}},\mathcal{M}_{\mathcal{T}})$.
% % 	since in the constraints we have $\{\tilde{\mathbf{Z}}_{(\Gamma,t')},t'\in[\nu]\backslash\{\mathcal{T}\cup\mathcal{D}\}\}$ in advance. 
% Otherwise using argument similar to that in the proof of Lemma~\ref{lemma7} and the above mentioned facts, both terms in \eqref{28} are equal to $H\big(\{\tilde{\mathbf{Z}}_{(\Gamma,t')},t'\in[\nu]\backslash\{\mathcal{T}\cup\mathcal{D}\}\}\big)$ and the result is 0.
% Therefore, 
	\section{conclusion}
In this paper we propose \swift, which is a secure aggregation protocol for model aggregation in federated learning. Via partitioning users into groups and careful designs of intra and inter group secret sharing and aggregation schemes, \swift is able to achieve correct aggregation in presence of $D$ dropout users, with the worst-case security guarantee against $T$ users colluding with a curious server. Compared with previous secure aggregation protocols, \swift does not require an all-to-all communication network among users, and significantly slashes the communication load of the server.

	\bibliographystyle{ieeetr}
	\bibliography{References}
%	\appendix
%	\begin{proof}[	Proof of Lemma~\ref{lemma3}]\label{appendix}
%		Let us define $\mathbf{R}(x)\triangleq \sum\limits_{j=1}^{T}\mathbf{R}_jx^{j}$ and $\tilde{\mathbf{R}}\triangleq[\mathbf{R}(\beta_1), \mathbf{R}(\beta_2), \dots \mathbf{R}(\beta_T)]$. For any ${\mathbf{Y}}_1,{\mathbf{H}}_1\in \mathbb{F}^{L\times T}$ we have
%		\begin{align*}
%		P&({\mathbf{Y}}=\mathbf{Y}_1\big| \tilde{\mathbf{H}}=\mathbf{H}_1)\\
%		&=\frac{P(\tilde{\mathbf{H}}=\mathbf{H}_1\big|{\mathbf{Y}}=\mathbf{Y}_1 )P({\mathbf{Y}}=\mathbf{Y}_1)}{\sum_{\mathbf{Y}_j\in \mathbb{F}^{L\times T}}{P(\tilde{\mathbf{H}}=\mathbf{H}_1\big|{\mathbf{Y}}=\mathbf{Y}_j )P({\mathbf{Y}}=\mathbf{Y}_j)}}\\
%		&=\frac{P(\tilde{\mathbf{R}}=\mathbf{H}_1-\mathbf{Y}_1\big|{\mathbf{Y}}=\mathbf{Y}_1 )P({\mathbf{Y}}=\mathbf{Y}_1)}{\sum_{\mathbf{Y}_j\in \mathbb{F}^{L\times T}}{P(\tilde{\mathbf{R}}=\mathbf{H}_1-\mathbf{Y}_1\big|{\mathbf{Y}}=\mathbf{Y}_j )P({\mathbf{Y}}=\mathbf{Y}_j)}}\\
%		&\stackrel{\text{(a)}}{=}\frac{P(\tilde{\mathbf{R}}=\mathbf{H}_1-\mathbf{Y}_1)P({\mathbf{Y}}=\mathbf{Y}_1)}{\sum_{\mathbf{Y}_j\in \mathbb{F}^{L\times T}}{P(\tilde{\mathbf{R}}=\mathbf{H}_1-\mathbf{Y}_1 )P({\mathbf{Y}}=\mathbf{Y}_j)}}=P(\mathbf{Y}=\mathbf{Y}_1).
%		\end{align*}
%		where $(a)$ follows from the fact that $\mathbf{R}_j$ for $j\in[T]$ are chosen independently and uniformly at random from the field. Thus, $\tilde{\mathbf{R}}$ has a uniform distribution over $\mathbb{F}^{L\times T}$. Therefore, we have $H(\mathbf{Y}|\tilde{\mathbf{H}})=H(\mathbf{Y})$ which means $I(\mathbf{Y};\tilde{\mathbf{H}})=0$.
%	\end{proof}	
	
\end{document}